\documentclass[aps,prx,reprint,nofootinbib]{revtex4-2}
\pdfoutput=1

\usepackage{dcolumn}
\usepackage{physics,amsmath,amsthm,amsfonts,graphicx,times,xfrac,mathtools,mathrsfs,amssymb,bm,bbm,verbatim,appendix,mathrsfs,scalerel,lipsum,booktabs}
\usepackage[normalem]{ulem}
\usepackage{xcolor,hyperref}
\usepackage[caption=false,labelfont={normalsize}]{subfig}
\theoremstyle{plain}
\newtheorem{definition}{Definition}
\newtheorem{lemma}{Lemma}
\newtheorem{proposition}{Proposition}
\newtheorem{theorem}{Theorem}
\newtheorem{corollary}{Corollary}

\newcommand{\Def}{\coloneqq}
\newcommand{\Id}{\Tilde{\mathbb{I}}}
\newcommand{\Cbb}{\mathbb{C}}

\newcommand{\Lbb}{\mathbb{L}}
\newcommand{\Mbb}{\mathbb{M}}
\newcommand{\Sbb}{\mathbb{S}}
\newcommand{\Pbb}{\mathbb{P}}

\newcommand{\Ubb}{\mathbb{U}}
\newcommand{\Xbb}{\mathbb{X}}
\newcommand{\Ybb}{\mathbb{Y}}

\newcommand{\rmi}{\mathrm{i}}
\newcommand{\rmo}{\mathrm{o}}

\newcommand{\rmC}{\mathrm{C}}

\newcommand{\rmE}{\mathrm{E}}
\newcommand{\rmG}{\mathrm{G}}

\newcommand{\rmN}{\mathrm{N}}
\newcommand{\rmR}{\mathrm{R}}

\newcommand{\rmT}{\mathrm{T}}

\newcommand{\Acal}{\mathcal{A}}

\newcommand{\Ccal}{\mathcal{C}}

\newcommand{\CS}{\bm{\rmC}_{S}}
\newcommand{\CD}{\bm{\rmC}_{D}}

\newcommand{\TS}{\bm{\rmT}_{S}}
\newcommand{\NS}{\bm{\rmN}_{S}}
\newcommand{\RS}{\bm{\rmR}_{S}}
\newcommand{\Dcal}{\mathcal{D}}
\newcommand{\Ecal}{\mathcal{E}}
\newcommand{\ES}{\bm{\rmE}_{S}}
\newcommand{\Fcal}{\mathscr{F}}

\newcommand{\GS}{\bm{\rmG}_{S}}
\newcommand{\Hcal}{\mathcal{H}}
\newcommand{\Ical}{\mathcal{I}}
\newcommand{\Lcal}{\mathcal{L}}
\newcommand{\Mcal}{\mathcal{M}}
\newcommand{\Ncal}{\mathcal{N}}
\newcommand{\Pcal}{\mathcal{P}}
\newcommand{\Qcal}{\mathcal{Q}}
\newcommand{\Rcal}{\mathcal{R}}
\newcommand{\Scal}{\mathcal{S}}

\newcommand{\Ucal}{\mathcal{U}}

\newcommand{\Ycal}{\mathcal{Y}}
\newcommand{\Zcal}{\mathcal{Z}}

\begin{document}

\title{Quantifying and Bounding Spatiotemporal Correlations in Quantum Noise}

\author{Guilherme Zambon}
\email{guilhermezambon@usp.br}
\author{Diogo O. Soares-Pinto}
\email{dosp@ifsc.usp.br}
\affiliation{Sao Carlos Institute of Physics, University of Sao Paulo, IFSC – USP, 13566-590, Sao Carlos, SP, Brazil.}

\begin{abstract}
Spatial and temporal correlations in quantum noise challenge the local Markovian models commonly used in quantum information processing. We develop a unified operational framework for quantifying temporal, spatial, and total spatiotemporal correlations in general quantum processes. Using process tensors, we define these quantities through optimal distinguishability from Markovian, spatially local, and fully uncorrelated processes. Optimization over admissible probing combs ensures monotonicity under every superprocess that preserves the corresponding free set, while Choi-state functionals and restricted probes provide accessible lower bounds. We establish hierarchy and interpolation relations among the correlation measures and derive dimension-dependent upper bounds on temporal correlations transmitted by quantum memories, with tighter bounds for classical memory, together with universal ceilings imposed by the system dimension. These bounds turn certified lower estimates into witnesses of minimum memory dimension, nonclassicality under a memory-dimension constraint, and inconsistencies in the assumed process model. We illustrate the results with effective superconducting-qubit models featuring \(ZZ\) and \(XY\) interactions, exhibiting saturation of classical-memory, quantum-memory, and system-dimension ceilings. These results pave the way toward practical approaches to characterizing and addressing spatiotemporally correlated errors in quantum devices.

\end{abstract}

\maketitle

\section{Introduction}

Quantum computation offers a fundamentally new paradigm for information processing, with the potential to address problems that are beyond the practical reach of classical devices. Realizing this potential, however, requires the preparation, manipulation, and preservation of fragile quantum states over increasingly large and complex circuits. Any physical quantum processor inevitably interacts with uncontrolled degrees of freedom, and imperfections in state preparation, gates, measurements, and control lead to the progressive degradation of the encoded information. Noise therefore remains the central obstacle to extracting reliable computational results from noisy intermediate-scale quantum devices and, ultimately, to achieving large-scale fault-tolerant quantum computation \cite{nielsen2010quantum,preskill2018quantum,eisert2025mind}.

Common techniques for counteracting noise, such as quantum error correction and quantum error mitigation, often assume for simplicity and tractability that noise is uncorrelated across qubits and circuit layers \cite{terhal2015quantum,temme2017error,cai2023quantum}. In such descriptions, the noise is spatially local and Markovian: errors affecting different subsystems are independent, and the noise acting at one stage of the circuit carries no memory of previous stages. Although this approximation can provide an effective description in suitable regimes, it is not a generic property of physical quantum processors.

Realistic devices are subject to crosstalk, coherent residual couplings, common environmental modes, slowly fluctuating control parameters, leakage, and repeated interactions with microscopic defects \cite{sarovar2020detecting,klimov2018fluctuations,proctor2020detecting,mcewen2021removing}. These mechanisms naturally correlate errors across different subsystems and different times, as observed, for example, in superconducting-qubit processors \cite{lupke2020spectroscopy,white2023filtering,zou2024spatially}. Structured non-Markovian noise can be accommodated in fault-tolerance analyses under additional locality and strength assumptions \cite{terhal2005fault}, but general spatiotemporal correlations remain considerably harder to characterize and suppress than independent errors. This becomes particularly important as quantum processors approach the low physical and logical error rates required for fault-tolerant applications: approximations that are adequate for describing short noisy circuits may cease to be reliable when small correlations persist across many qubits, gates, and error-correction cycles \cite{fowler2014quantifying,chubb2021statistical,kam2025detrimental}. The distinction between these two descriptions is illustrated in Fig.~\ref{fig:circuits}: panel (a) shows the conventional local, memoryless noise model, whereas panel (b) represents the general spatiotemporally correlated noise considered here. Progress toward fault tolerance therefore requires moving beyond idealized uncorrelated models and developing methods capable of treating noise as it manifests in actual devices---correlated both in space and in time.

\begin{figure*}[t]
\captionsetup[subfloat]{labelformat=empty,captionskip=-24pt}
    \centering
    \subfloat[]{\includegraphics[width=0.85\linewidth]{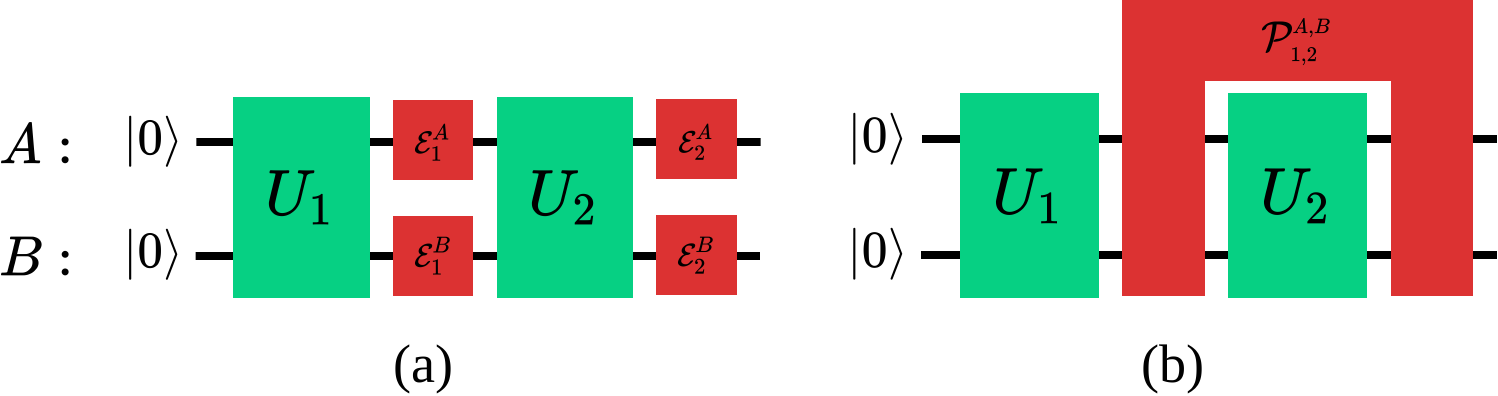}\label{fig:uncorr}}\subfloat[]{\label{fig:corr}}
    \caption{Two noise models for a simple circuit. In both cases, the two qubits are initially prepared in the $\ket{0}$ state and are subsequently acted upon by a first layer of unitaries represented by $U_1$. Following this initial operation, and prior to the subsequent layer $U_2$, the qubits are subject to noise arising from unwanted interactions with their surroundings. The noise occurs again after $U_2$, such that the final state of the qubits is determined by the joint action of the unitaries and the noise processes. (a) The standard approach to this scenario is to assume the noise is uncorrelated both in space and time, being described by local channels. (b) In general, we might have spatiotemporally correlated noise, described by a process tensor.}
    \label{fig:circuits}
\end{figure*}

Although accounting for these correlations complicates noise characterization and suppression, developing mathematical tools that identify, quantify, and constrain them is necessary for making the problem tractable. A natural framework for this purpose is provided by process tensors
\cite{pollock2018non,milz2017introduction,milz2021stochastic,
keeling2026process}. A process tensor describes the evolution of an open quantum system across multiple times by mapping an experimenter's sequence of interventions to the resulting output state. Mathematically, it is a quantum comb
\cite{chiribella2008quantum,chiribella2009theoretical}, whose causal structure retains the information transmitted between different stages of the dynamics through an uncontrolled environment. Markovian dynamics corresponds to a process that factorizes across its
temporal steps \cite{pollock2018operational,milz2019completely}, while a multipartite process tensor can simultaneously retain correlations between different subsystems. Process tensors therefore provide an operational description of general multitime, multisystem noise without requiring the dynamics to be approximated by a sequence of independent channels.

This framework has enabled problems traditionally formulated under Markovian assumptions to be extended to general dynamics with memory. Examples include the characterization and control of noise on quantum
processors
\cite{milz2018reconstructing,white2020demonstration,white2022non,
white2022characterization,white2025what,white2025unifying}, randomized benchmarking in the presence of temporally correlated errors \cite{figueroa2021randomized,figueroa2022towards,figueroa2024operational}, the simulation and learning of non-Markovian dynamics
\cite{jorgensen2019exploiting,jorgensen2020discrete,
cygorek2022simulation,cygorek2024sublinear,guo2020tensor,
fux2023tensor,dowling2024capturing,cygorek2025understanding,
zhang2025learning}, and the formulation of multitime quantum thermodynamics \cite{strasberg2019repeated,huang2022fluctuation,huang2023multiple,zambon2025quantum}. More generally, the process-tensor formalism has supported operational
definitions of quantum Markovianity, information flow, memory strength,
Markov order, and classical versus quantum memory
\cite{milz2019completely,taranto2019quantum,taranto2019structure,
figueroa2019almost,milz2020when,taranto2021non,
sakuldee2022connecting,taranto2023hidden,
taranto2024characterising,santos2025quantifying}. These developments demonstrate that explicitly retaining temporal correlations is not merely a formal generalization, but can qualitatively change how experimentally relevant tasks and dynamical resources are understood.

Recent work has made spatiotemporally correlated noise increasingly tractable from several directions. White \textit{et al.} developed a tensor-network process-tensor framework for classifying non-Markovian effects, reconstructing space-time correlations in quantum processors, and exploiting the resulting models for noise-aware control \cite{white2025unifying}. Oda \textit{et al.} introduced a sparse, experimentally validated model of transmon-based multiqubit operations that captures non-Markovian effects arising from spatiotemporally correlated noise while retaining predictive tractability \cite{oda2026sparse}. Kam \textit{et al.} formulated spatiotemporal Pauli processes, mapping general process-tensor noise under multitime Pauli twirling to correlated Pauli-fault trajectories suitable for scalable simulation and decoding \cite{kam2026spatiotemporal}. Other approaches have focused on the physical nature of temporal memory. In particular, Bäcker \textit{et al.} developed criteria for witnessing quantum memory in spin-boson dynamics \cite{backer2026verifying}, complementing broader investigations of classical and quantum memory in multitime processes \cite{taranto2024characterising,santos2025quantifying}. Together, these works provide tools for reconstructing, modeling, simulating, and witnessing memory effects in correlated quantum dynamics.

The quantification of these correlations, however, remains fragmented. Temporal correlations have commonly been measured by comparing the Choi state of a process with Choi states associated with Markovian processes, whereas spatial correlations have been quantified through distances or information measures relative to product channels or product states. Although these constructions provide useful diagnostics, fixed Choi-state divergences do not, in general, furnish an operational distinguishability measure for quantum processes and need not satisfy data processing under physical transformations of those processes. Here, we address these limitations by introducing a unified, resource-theoretically motivated framework for quantifying temporal, spatial, and total spatiotemporal correlations in general multisystem, multitime processes. Each quantity is defined through optimal distinguishability from the corresponding free set of Markovian, spatially local, or fully uncorrelated processes. Optimization over admissible probing combs ensures a data-processing inequality under every superprocess that preserves the relevant free set, thereby providing an operational quantification layer complementary to existing reconstruction and noise-modeling methods.

We further establish hierarchy and interpolation relations among the three correlation measures and show that fixed Choi-state functionals and other restricted probing strategies provide computable lower bounds. We derive dimension-dependent upper bounds on temporal correlations transmitted through finite classical and quantum memories, together with universal ceilings imposed by the system dimension. Inverting these bounds allows one to certify a minimum memory dimension, exclude bounded-dimensional classical or quantum memory models, or identify an inconsistency in the assumed process description. Finally, we illustrate this certification procedure using effective superconducting-qubit models with \(ZZ\) and \(XY\) interactions, identifying regimes in which the corresponding Choi-state correlations saturate the classical-memory, quantum-memory, and system-dimension ceilings.

The remainder of this work is organized as follows. Section~\ref{sec:framework} introduces the multisystem process-tensor framework and formalizes the sets of Markovian, spatially local, and fully uncorrelated processes. Section~\ref{sec:quantifiers} constructs operational measures of temporal, spatial, and total spatiotemporal correlations, contrasts them with their Choi-state counterparts, and relates restricted probing strategies to the error incurred by neglecting noise correlations under a given class of controls. Section~\ref{sec:bounds} establishes hierarchy and interpolation relations among the correlation measures, derives bounds based on finite-dimensional quantum and classical memory as well as universal system-dimensional ceilings, and formulates their certification consequences. Section~\ref{sec:applications} applies these results to two-step superconducting-qubit models with common and independent environmental memories coupled through \(ZZ\) and \(XY\) interactions. Finally, Sec.~\ref{sec:conclusion} summarizes the implications of the framework and discusses directions toward the experimental characterization and treatment of correlated noise. Supporting proofs and model calculations are provided in the appendices. We begin by formalizing the two noise descriptions contrasted in Fig.~\ref{fig:circuits}.

\section{Framework}\label{sec:framework}

In the uncorrelated model of Fig.~\ref{fig:circuits}, the noise factorizes into local channels $\Ecal_j^X$ across both subsystems and circuit layers. General noise need not admit this factorization and is instead described by a multisystem process tensor $\Pcal_{1,2}^{A,B}$, which we take as the fundamental object of our theory~\cite{pollock2018non,pollock2018operational}.

Consider a scenario where we prepare an initial state $\rho$ at time $t_0$ and then perform a sequence of control operations $\{\Acal_1,\ldots,\Acal_{n-1}\}$ on our system of interest at times $\{t_1,\ldots,t_{n-1}\}$, where each $\Acal_j$ is a completely positive and trace-preserving (CPTP) map, namely, a quantum channel. This goes beyond the typically considered case of unitaries, also encompassing control operations that involve interactions between the system and an ancilla which is later discarded. To determine the final state of the system, we must account for its interaction with an uncontrolled environment between control operations. If the dynamics is Markovian, the evolution between consecutive controls is described by independent quantum channels $\Ecal_j$, which may represent depolarizing, dephasing, or more general types of quantum evolution.

We denote by $\rmi_j$ and $\rmo_j$ the system immediately before and after the $j$th noise block, respectively. Here, $\Dcal(\Hcal_k)$ denotes the set of density operators on the Hilbert space $\Hcal_k$, which we take to be finite-dimensional, and our convention is
\begin{equation}
    \Ecal_j:\Dcal(\Hcal_{\rmi_j})\longrightarrow\Dcal(\Hcal_{\rmo_j}),
    \qquad
    \Acal_j:\Dcal(\Hcal_{\rmo_j})\longrightarrow\Dcal(\Hcal_{\rmi_{j+1}}).
    \label{eq:leg-convention}
\end{equation}
All connected process and control legs are understood to be isomorphic, and step and subsystem labels are suppressed whenever they are clear from context. With this convention, the final state $\rho^{\prime}\in\Dcal(\Hcal_{\rmo_n})$ at time $t_n$ is
\begin{equation}
    \rho^{\prime}=\Ecal_n\circ\Acal_{n-1}\circ\cdots\circ\Acal_1\circ\Ecal_1(\rho),
\end{equation}
as illustrated in Fig.~\subref{fig:mkv}.

Nevertheless, since the system repeatedly interacts with the same environment, information backflows may occur, rendering the resulting dynamics non-Markovian. In this scenario, the evolution cannot be properly described by a set of $n$ independent quantum channels as before. In general, the dynamics is instead characterized by an $n$-step process tensor $\Pcal$, which maps the initial state and the sequence of control operations to the final state of the system,
\begin{equation}
    \rho^{\prime}=\Pcal(\rho,\Acal_1,\ldots,\Acal_{n-1}), 
\end{equation}
as depicted in Fig.~\subref{fig:nmkv}. To accurately characterize the physical properties of the process, $\Pcal$ must be multilinear, completely positive, trace-preserving, and time-ordered. This latter condition reflects the causal structure of the dynamics, where past operations can influence future outputs, but not vice versa. These requirements imply that $\Pcal\in\Pbb_n$, where $\Pbb_n$ denotes the set of $n$-step quantum combs, or simply $n$-combs, following Ref.~\cite{chiribella2009theoretical}. Since the control operations may themselves be correlated through an ancilla, their most general description is provided by a compatible probing comb $\Scal\in\Sbb_n$, where $\Sbb_n$ consists of $n$-step combs without the first input leg, as shown in Fig.~\subref{fig:combs}. We allow $\Scal$ to retain an arbitrary ancilla system $A$, so that the contraction may produce $\Pcal(\Scal)\in\Dcal(\Hcal_{\rmo_n}\otimes\Hcal_A)$. Consequently, the final output is written as
\begin{equation}
    \rho^{\prime}=\Pcal(\Scal),     
\end{equation}
where $\Pcal\in\Pbb_n$ and $\Scal\in\Sbb_n$.

\begin{figure*}[t]
\captionsetup[subfloat]{labelformat=empty,captionskip=-24pt}
    \centering
    \subfloat[]{\includegraphics[width=0.85\linewidth]{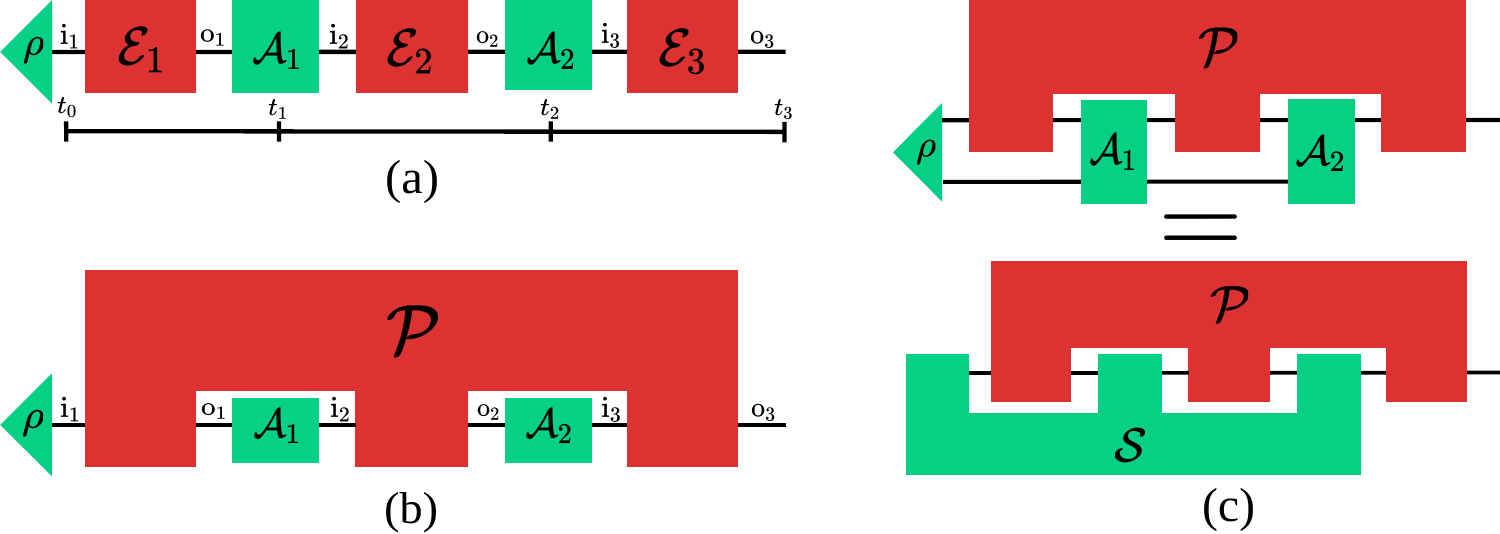}\label{fig:mkv}}
    \subfloat[]{\label{fig:nmkv}}
    \subfloat[]{\label{fig:combs}}
    \caption{(a) Three-step evolution of a quantum system under control operations and Markovian noise. The system is prepared in the state $\rho\in\Dcal(\Hcal_{\rmi_1})$ at time $t_0$ and then is subject to control operations $\Acal_1:\Dcal(\Hcal_{\rmo_1})\to\Dcal(\Hcal_{\rmi_2})$ and $\Acal_2:\Dcal(\Hcal_{\rmo_2})\to\Dcal(\Hcal_{\rmi_3})$ at times $t_1$ and $t_2$, respectively. The interactions with an uncontrolled, yet Markovian, environment between operations are described by the channels $\Ecal_j:\Dcal(\Hcal_{\rmi_j})\to\Dcal(\Hcal_{\rmo_j})$. The final state $\rho^\prime\in\Dcal(\Hcal_{\rmo_3})$ of the system is given by $\rho^{\prime}=\Ecal_3\circ\Acal_{2}\circ\Ecal_2\circ\Acal_1\circ\Ecal_1(\rho)$. (b) Three-step evolution of a quantum system under control operations and non-Markovian noise. Unlike in the Markovian case, the noise process cannot be described by a set of independent channels. In general, it will be given by a process tensor $\Pcal$, mapping the control operations to the final state of the system, that is, $\rho^{\prime}=\Pcal(\rho,\Acal_1,\Acal_{2})$. (c) The physical constraints on multitime dynamics imply any process tensor describing a three-step process must be a three-step quantum comb, i.e., $\Pcal\in\Pbb_3$. Since control operations may also be correlated in time through an ancilla, they are, in general, also described by a comb. In this case, we have $\Scal\in\Sbb_3$. Operationally, $\Scal$ encodes all the control operations and $\Pcal$ encapsulates the effects of the repeated interactions with an uncontrolled environment.}
    \label{fig:pt}
\end{figure*}

Since we are interested in describing dynamics that are also spatially correlated, consider the joint $n$-step evolution of a set of $m$ systems, labeled $\{X_1,\ldots,X_m\}$, whose composite dynamics is described by a comb $\Pcal\in\Pbb_n^m$, where $\Pbb_n^m$ is the set of $n$-step combs acting on $m$ systems. If $\Pcal=\Pcal^{X_1}\otimes\cdots\otimes\Pcal^{X_m}$, where $\Pcal^{X_k}\in\Pbb_n$ is the comb describing the multitime evolution of system $X_k$, then we say the dynamics is spatially uncorrelated (or local), and denote the set of such processes by $\Lbb_{n}^{m}\subset\Pbb_n^m$. Similarly, we denote the set of Markovian $n$-step combs acting on $m$ systems by $\Mbb_n^m\subset\Pbb_n^m$. In the absence of both spatial and temporal correlations, the process $\Pcal$ is described by a sequence of independent quantum channels $\{\Ecal_{1},\ldots,\Ecal_n\}$, where each step factorizes as $\Ecal_{j}=\Ecal_{j}^{X_1}\otimes\cdots\otimes\Ecal_{j}^{X_m}$, that is,
\begin{equation}
    \Pcal
 =\bigotimes_{j=1}^{n}\bigotimes_{k=1}^{m}\Ecal_j^{X_k}.
 \label{eq:free-process}
\end{equation}
Here, the local channel $\Ecal_{j}^{X_k}:\Dcal(\Hcal_{\rmi_j}^k)\to\Dcal(\Hcal_{\rmo_j}^k)$ characterizes the evolution of the $k$-th system between times $t_{j-1}$ and $t_j$. This specific structure represents the idealized case where the dynamics is both Markovian and spatially local, as in Fig.~\subref{fig:uncorr}. We call $\Ubb_n^m\Def(\Mbb_n^m\cap\Lbb_n^m)\subset\Pbb_n^m$ the set of such processes, as they can be used to describe uncorrelated noise. More generally, we could also have dynamics that are spatially local and non-Markovian, spatially correlated and Markovian, or having both spatial and temporal correlations, as in Fig.~\subref{fig:corr}.

Equivalently, an $n$-step process is represented by a positive normalized Choi state
\begin{equation}
    \Upsilon^{\Pcal}\in\Dcal\!\left[\bigotimes_{j=1}^{n}\left(\Hcal_{\rmo_j}\otimes\Hcal_{\rmi_j}\right)\right]
    \label{eq:process-choi-space}
\end{equation}
satisfying the causal normalization conditions of a quantum comb~\cite{chiribella2009theoretical}. For a multipartite system, $\Hcal_{\rmi_j}=\bigotimes_{k=1}^{m}\Hcal_{\rmi_j}^{X_k}$ and similarly for $\Hcal_{\rmo_j}$. Operationally, the Choi state $\Upsilon^{\Pcal}$ is obtained by feeding half of a maximally entangled state into each input leg while retaining the corresponding reference systems and all process outputs, as shown in Fig.~\ref{fig:choi}.

\begin{figure}
    \centering
    \includegraphics[width=0.85\columnwidth]{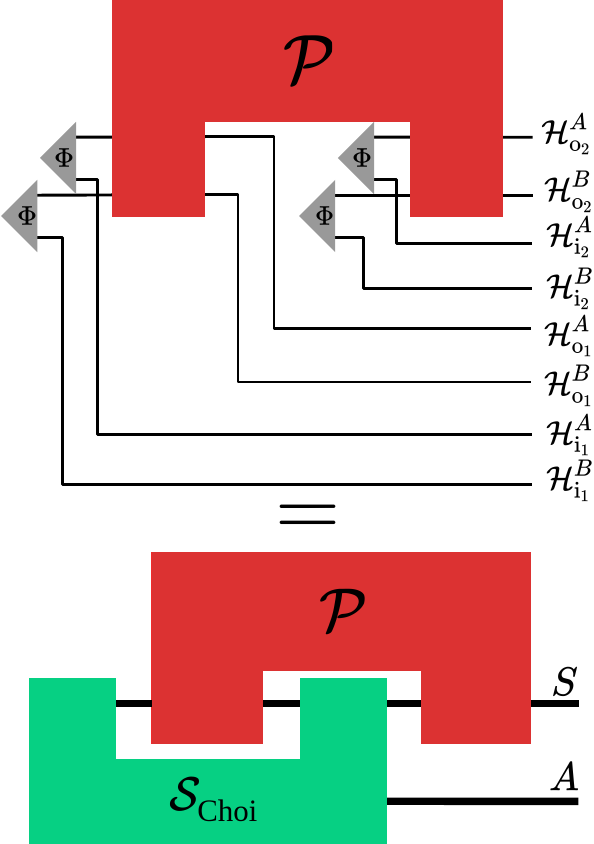}
    \caption{Circuit implementing the Choi state of a process tensor $\Pcal\in\Pbb_2^2$. Each process input is supplied with one half of a maximally entangled state, while the corresponding reference systems and process outputs are retained. The resulting deterministic probing comb $\Scal_{\mathrm{Choi}}\in\Sbb_2$ produces the normalized process Choi state, $\Pcal(\Scal_{\mathrm{Choi}})=\Upsilon^{\Pcal}$. The retained reference systems are included in the output ancilla of the probing comb.}
    \label{fig:choi}
\end{figure}

\section{Correlation quantifiers}\label{sec:quantifiers}

Having established a formal framework to describe general quantum dynamics, we now address the problem of how to meaningfully quantify their spatiotemporal correlations. The approach employed here is to determine how easy it is to distinguish a given process from the sets $\Mbb$, $\Lbb$, or $\Ubb$, according to a chosen measure of state distinguishability. Following Ref.~\cite{zambon2024process}, we formally define this concept below.
\begin{definition}[State divergence]
A state divergence is a mapping $S:\Dcal(\Hcal)\times\Dcal(\Hcal)\to[0,+\infty]$ that is faithful, $S(\rho,\sigma)=0\iff\rho=\sigma$, and obeys data processing under every quantum channel,
\begin{equation}
 S\!\left[\Ecal(\rho),\Ecal(\sigma)\right]\le S(\rho,\sigma).
 \label{eq:state-DPI}
\end{equation}
\end{definition}
This requirement ensures that any measure $S$ quantifies how distinguishable two given states are. Two particular state divergences that we explore extensively throughout this work are the relative entropy, $D\qty(\rho,\sigma)=\tr[\rho(\log \rho - \log\sigma)]$, and the trace distance, $T(\rho,\sigma)=\tr[|\rho-\sigma|]/2$. Throughout, logarithms are taken in base two, and $H(\rho)\Def-\tr(\rho\log\rho)$ denotes the von Neumann entropy. We now investigate two different ways of using state divergences for distinguishing quantum processes.


\subsection{Choi state approach}

A widely used approach for distinguishing quantum processes consists of using some state divergence to distinguish their Choi states, which we formally define below, again following Ref.~\cite{zambon2024process}.
\begin{definition}[Choi divergence]
A Choi divergence $C_{S}(\Pcal,\Qcal)$ of a pair of processes $\Pcal,\Qcal\in\Pbb$ is given by a state divergence $S$ between their Choi states,
    \begin{equation}
        C_{S}(\Pcal,\Qcal)=S(\Upsilon^{\Pcal},\Upsilon^{\Qcal}).
    \end{equation}
\end{definition}
The task of quantifying correlations of some given process $\Pcal$ then reduces to distinguishing its Choi state from those of processes in some set $\Xbb$.
\begin{definition}[Choi correlation functionals]
The Choi temporal correlations $\NS^{\mathrm{Choi}}$ (or Choi non-Markovianity), Choi spatial correlations $\RS^{\mathrm{Choi}}$, and Choi total correlations $\TS^{\mathrm{Choi}}$ are defined as
    \begin{align}
        &\NS^{\mathrm{Choi}}(\Pcal)\Def\CS^{\Mbb}(\Pcal),\\
        &\RS^{\mathrm{Choi}}(\Pcal)\Def\CS^{\Lbb}(\Pcal),\\
        &\TS^{\mathrm{Choi}}(\Pcal)\Def\CS^{\Ubb}(\Pcal),
    \end{align}
where
    \begin{equation}
    \CS^{\Xbb}(\Pcal)= \inf_{\Qcal\in\Xbb}C_{S}(\Pcal,\Qcal)
    \end{equation}
is the Choi distinguishability of process $\Pcal$ with respect to the set $\Xbb$.
\end{definition}
The correlation functionals introduced above unify previous frameworks that utilized Choi-state correlations to measure either exclusively temporal correlations (i.e., non-Markovianity)~\cite{pollock2018operational,figueroa2019almost,berk2021resource,figueroa2021randomized,berk2023extracting,taranto2024characterising,zambon2024relations,zhang2025learning} or exclusively spatial correlations~\cite{rivas2015quantifying,postler2018experimental}. Choi-state measures are often significantly easier to calculate because they fix the probing strategy and may avoid an optimization over general probing combs. For instance, when $S$ is the relative entropy, $\bm{\rmN}_{D}^{\mathrm{Choi}}(\Pcal)$ reduces to the multipartite mutual information among the temporal blocks of $\Upsilon^{\Pcal}$~\cite{berk2021resource}. Their use as resource quantifiers was additionally motivated by the assumption that they are monotonic under operations that do not generate temporal correlations~\cite{berk2021resource}. However, Ref.~\cite{zambon2024process} demonstrated that this monotonicity fails in general for $\bm{\rmN}_{S}^{\mathrm{Choi}}$ and proposed instead the distinguishability-based approach discussed next.

Similarly, the spatial correlation measure introduced in Ref.~\cite{rivas2015quantifying}, which up to normalization is equivalent to $\bm{\rmR}_{D}^{\mathrm{Choi}}$ for our definitions, relies on monotonicity under superchannels that do not generate spatial correlations. Nevertheless, in Appendix~\ref{app:counterexample}, we adapt the counterexample from Ref.~\cite{zambon2024process} to show that this spatial measure suffers from the exact same pathology; contrary to the claims in Ref.~\cite{rivas2015quantifying}, the measure $\bm{\rmR}_{D}^{\mathrm{Choi}}$ does not fulfill the monotonicity property. Consequently, the total Choi correlation functional $\TS^{\mathrm{Choi}}$ is likewise nonmonotonic under superprocesses that preserve the set of uncorrelated processes. That is, there exist physically admissible maps $\bm{Z}$ on combs satisfying $\bm{Z}(\Ubb_n^m)\subseteq\Ubb_{n'}^{m'}$ such that $\TS^{\mathrm{Choi}}\!\left[\bm{Z}(\Pcal)\right] >\TS^{\mathrm{Choi}}(\Pcal)$, an undesirable feature for any correlation quantifier motivated by a resource-theoretic framework. Yet, despite lacking this important property of monotonicity, in Sec.~\ref{sec:bounds} we show next that these correlations in the Choi state remain highly relevant, as they provide useful bounds on the valid quantifiers that do fulfill this condition.

\subsection{Distinguishability approach}

We now adopt an alternative approach to quantify correlations of general quantum processes. First, we note that the fundamental limitation of quantifiers based on Choi divergences stems from their lack of monotonicity under the action of superprocesses. In Ref.~\cite{zambon2024process}, this drawback is circumvented by employing generalized divergences instead, which we formally define below.
\begin{definition}[Generalized divergence]
A generalized divergence $G_{S}(\Pcal,\Qcal)$ of a pair of processes $\Pcal,\Qcal\in\Pbb$ is given by a state divergence $S$ between their outputs when acting on the same comb $\Scal\in\Sbb$, maximized over all compatible combs,
    \begin{equation}
        G_{S}(\Pcal,\Qcal)=\sup_{\Scal\in\Sbb}S\!\left[\Pcal(\Scal),\Qcal(\Scal)\right].
    \end{equation}
\end{definition}
Operationally, $G_{S}$ measures how distinguishable the outputs of $\Pcal$ and $\Qcal$ can be made given an optimal control strategy $\Scal$ designed to differentiate them. Alternatively, if one possesses the process $\Pcal$ but models it as if it were $\Qcal$, $G_{S}$ quantifies how far the actual output deviates from the expected one in the worst-case scenario. For example, if we consider a pair $(\Ecal,\Fcal)$ of one-step quantum combs, i.e. quantum channels, then $G_T(\Ecal,\Fcal)$ is the diamond distance between them~\cite{watrous2018theory}. We now use the generalized divergence to define the correlations of a process.
\begin{definition}[Correlations of a process]
The temporal correlations $\NS$ (or non-Markovianity), spatial correlations $\RS$, and total correlations $\TS$ are defined as
    \begin{align}
        &\NS(\Pcal)\Def\GS^{\Mbb}(\Pcal),\\
        &\RS(\Pcal)\Def\GS^{\Lbb}(\Pcal),\\
        &\TS(\Pcal)\Def\GS^{\Ubb}(\Pcal),
    \end{align}
where
    \begin{equation}
    \GS^{\Xbb}(\Pcal)= \inf_{\Qcal\in\Xbb}G_{S}(\Pcal,\Qcal)
    \end{equation}
is the generalized distinguishability $\GS^{\Xbb}(\Pcal)$ of process $\Pcal$ with respect to the set $\Xbb$.
\end{definition}
The monotonicity of generalized divergences under the action of superprocesses~\cite{zambon2024process} directly implies the monotonicity of $\bm{\rmG}_{S}^{\Xbb}$ under superprocesses that do not generate correlations that are not present in $\Xbb$. We formally state this property below as a theorem, the proof of which is provided in Appendix~\ref{app:monotonicity}.
\begin{theorem}[Monotonicity of $\bm{\rmG}_{S}^{\Xbb}$]\label{thm:monotonicity}
Let $\bm{Z}$ be a superprocess satisfying $\bm{Z}(\Xbb)\subseteq\Xbb$. Then, for any $\Pcal\in\Pbb$ it holds
    \begin{equation}
        \bm{\rmG}_{S}^{\Xbb}\!\left[\bm{Z}(\Pcal)\right]\le \bm{\rmG}_{S}^{\Xbb}(\Pcal).
    \end{equation}
\end{theorem}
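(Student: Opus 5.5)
The plan is to first establish data processing for the generalized divergence itself, $G_S[\bm{Z}(\Pcal),\bm{Z}(\Qcal)]\le G_S(\Pcal,\Qcal)$ for every pair $\Pcal,\Qcal\in\Pbb$ and every superprocess $\bm{Z}$. This is the result of Ref.~\cite{zambon2024process} cited just before the theorem. The monotonicity of $\bm{\rmG}_{S}^{\Xbb}$ then follows from a short infimum argument.

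For the data-processing step I would use the structure theorem for superprocesses on combs~\cite{chiribella2009theoretical}. Any admissible $\bm{Z}$ mapping $n$-combs to $n'$-combs can be realized by composing the input comb with a fixed comb $\Zcal$, possibly with its own ancilla, that wraps around the slots of $\Pcal$. Dually, for any probing comb $\Scal'\in\Sbb_{n'}$ compatible with $\bm{Z}(\Pcal)$, the contraction of $\Scal'$ with $\Zcal$ is again a valid probing comb $\Scal\in\Sbb_n$ for the original process, with an enlarged ancilla. It satisfies
\begin{equation}
\bm{Z}(\Pcal)(\Scal')=\Ecal\left[\Pcal(\Scal)\right],
\end{equation}
where $\Ecal$ is a fixed quantum channel (the final tooth of $\Zcal$ acting on the output together with the ancilla) that does not depend on $\Pcal$. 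The same identity holds with $\Qcal$ in place of $\Pcal$, with the same $\Scal$ and the same $\Ecal$.

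With that identity, the state data-processing inequality~\eqref{eq:state-DPI} gives
\begin{equation}
S\left[\bm{Z}(\Pcal)(\Scal'),\bm{Z}(\Qcal)(\Scal')\right]\le S\left[\Pcal(\Scal),\Qcal(\Scal)\right]\le G_S(\Pcal,\Qcal).
\end{equation}
Taking the supremum over $\Scal'$ then yields the desired inequality for $G_S$. I expect the main obstacle to be this step: verifying that the composite of $\Scal'$ with $\Zcal$ really is an element of $\Sbb_n$ satisfying the causal normalization conditions, and that all residual post-processing can be absorbed into a single channel on the retained ancilla. This is where the allowance in $\Sbb_n$ for an arbitrary output ancilla $A$ is essential, and I would rely on the link-product closure of combs to settle it.

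Finally, fix $\Pcal\in\Pbb$ and any $\Qcal\in\Xbb$. Since $\bm{Z}(\Xbb)\subseteq\Xbb$, the process $\bm{Z}(\Qcal)$ lies in $\Xbb$, so
\begin{equation}
\bm{\rmG}_{S}^{\Xbb}\left[\bm{Z}(\Pcal)\right]\le G_S\left[\bm{Z}(\Pcal),\bm{Z}(\Qcal)\right]\le G_S(\Pcal,\Qcal).
\end{equation}
Taking the infimum over $\Qcal\in\Xbb$ gives $\bm{\rmG}_{S}^{\Xbb}[\bm{Z}(\Pcal)]\le\bm{\rmG}_{S}^{\Xbb}(\Pcal)$, which is the claim. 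Here the free set on the left is understood as the corresponding free set of the output dimensions, e.g.\ $\Ubb_{n'}^{m'}$.
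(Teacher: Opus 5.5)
Your proposal is correct and matches the paper's proof. The paper likewise combines $G_S[\bm{Z}(\Pcal),\bm{Z}(\Qcal)]\le G_S(\Pcal,\Qcal)$ with $\bm{Z}(\Qcal)\in\Xbb$ for all $\Qcal\in\Xbb$, phrased there as restricting the infimum to $\bm{Z}(\Xbb)$. The only difference is that the paper cites the data-processing inequality for generalized divergences as Theorem~1 of Ref.~\cite{zambon2024process}, whereas you sketch its proof by absorbing the superprocess into the probing comb.
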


\subsection{Correlation error under restricted control}

We now go back to the original problem of quantifying the impact of spatiotemporally correlated noise in quantum circuits, which are typically modeled as a sequence of independent layers of unitary gates. Let $\Cbb_n^m \subset \Sbb_n^m$ denote the subset of combs in $\Sbb_n^m$ where the first leg initializes a pure state $\ket{\psi_0} \in \Hcal_{\text{i}_1}^{X_1} \otimes \cdots \otimes \Hcal_{\text{i}_1}^{X_m}$, and each of the subsequent $n-1$ steps consists of a joint unitary operation over all $m$ qubits, with no auxiliary systems involved. In this manner, the elements of $\Cbb_n^m$ characterize all possible circuits composed of $n-1$ layers acting on $m$ qubits, along with all possible pure initial states. Analogously, the set of all potential noise processes affecting such circuits is described by the combs in $\Pbb_n^m$.

Now, suppose the actual noise in a given platform is described by a process tensor $\Pcal \in \Pbb_n^m$, but one instead assumes it to be uncorrelated, modeling it via some $\Qcal \in \Ubb_n^m$. We can then address the question of how much the actual result of the computation can deviate, according to a chosen state divergence $S$, from the expected outcome in the worst-case scenario over all possible circuits $\Ccal\in \Cbb_n^m$, assuming that the best possible uncorrelated noise approximation was chosen. This deviation is captured by the correlation error for ideal (unitary) control $\bm{\rmE}_{S}^{\Cbb}$, which we formally define below.
\begin{definition}[Correlation error for ideal control]
The correlation error for ideal control $\bm{\rmE}_{S}^{\Cbb}(\Pcal)$ of a process $\Pcal\in\Pbb$, for a given state divergence $S$, is given by
    \begin{equation}
        \bm{\rmE}_{S}^{\Cbb}(\Pcal)= \inf_{\Qcal\in\Ubb}\sup_{\Ccal\in\Cbb}S\!\left[\Pcal(\Ccal),\Qcal(\Ccal)\right].
    \end{equation}
\end{definition}
\noindent Operationally, this quantity gives the maximum error one can incur in a computation by ignoring the spatiotemporal correlations of the noise. This is achieved by assessing how well $\Pcal$ can be distinguished from its optimal uncorrelated approximation when restricted exclusively to unitary circuits.

Nevertheless, due to intrinsic hardware limitations, experimental quantum circuits are never perfectly described by a sequence of ideal, independent unitaries. This implies that the actual control comb $\Scal$ may lie outside the set $\Cbb$, potentially leading to computational errors that exceed $\bm{\rmE}_{S}^{\Cbb}$. To account for the error arising from noise correlations when the control operations themselves are noisy, we may change the maximization in the definition of $\bm{\rmE}_S^{\Cbb}$ to some set $\Ybb\subseteq\Sbb$ describing the actual available control. This yields the correlation error for $\Ybb$ control $\bm{\rmE}_{S}^{\Ybb}$, defined below.
\begin{definition}[Correlation error for $\Ybb$ control]
The correlation error for $\Ybb$ control $\bm{\rmE}_{S}^{\Ybb}(\Pcal)$ of a process $\Pcal\in\Pbb$, for a given state divergence $S$, is given by
    \begin{equation}
        \bm{\rmE}_{S}^{\Ybb}(\Pcal)= \inf_{\Qcal\in\Ubb}\sup_{\Ycal\in\Ybb}S\!\left[\Pcal(\Ycal),\Qcal(\Ycal)\right].
    \end{equation}
\end{definition}
\noindent Since both $\Ybb$ and $\Cbb$ are subsets of $\Sbb$, we have $\ES^{\Ybb}(\Pcal)\le\TS(\Pcal)$ and $\ES^{\Cbb}(\Pcal)\le\TS(\Pcal)$, that is, the errors under restricted control are upper bounded by the total correlations of the process. Moreover, let $\Ybb=\{\Scal_{\mathrm{Choi}}\}$, since the Choi probe is one particular choice we have
\begin{equation}
 \TS^{\mathrm{Choi}}(\Pcal)=\ES^{\{\Scal_{\mathrm{Choi}}\}}(\Pcal)\le\TS(\Pcal).
 \label{eq:choi-lower-bound}
\end{equation}
These lower bounds to $\TS$ are useful both for being easier to calculate and for carrying important practical significance. Furthermore, if any of them reaches a valid upper bound on $\TS$, it certifies the exact value of the correlation, as we show in the example of Sec.~\ref{sec:applications}. Having defined these useful quantifiers, we now turn to the derivation of useful bounds regarding them.

\section{Bounds on spatiotemporal correlations}\label{sec:bounds}

While the quantifiers we defined in Sec.~\ref{sec:quantifiers} satisfy important properties and carry relevant physical significance, it may be challenging to calculate them in practical scenarios, both operationally and numerically. Thus, we now derive a series of useful analytical bounds that may help circumvent this caveat in practical applications.

\subsection{Immediate lower bounds}

We begin by stating a set of lower bounds to the correlation quantifiers. Despite being immediate from their definitions, these bounds are especially informative when they saturate one of the upper bounds we derive later, helping certify that the quantity they are bounding from below is actually saturating its related upper bound, while being easier to calculate than the actual quantity.

\begin{proposition}[Correlation hierarchy]\label{prop:hierarchy}
For every process $\Pcal$ and every divergence used in the definitions,
\begin{equation}
 \max\{\NS(\Pcal),\RS(\Pcal)\}\le\TS(\Pcal).
 \label{eq:operational-lower-hierarchy}
\end{equation}
The same ordering holds for the Choi quantifiers,
\begin{equation}
 \max\{\NS^{\mathrm{Choi}}(\Pcal),\RS^{\mathrm{Choi}}(\Pcal)\}
 \le \TS^{\mathrm{Choi}}(\Pcal).
 \label{eq:choi-lower-hierarchy}
\end{equation}
Moreover, restriction to any fixed class of probing combs gives a lower bound on the corresponding operational quantifier.  In particular,
\begin{equation}
 \NS^{\mathrm{Choi}}\le\NS,\qquad
 \RS^{\mathrm{Choi}}\le\RS,\qquad
 \TS^{\mathrm{Choi}}\le\TS.
 \label{eq:choi-operational-order}
\end{equation}
\end{proposition}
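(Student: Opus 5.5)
The plan rests on two elementary facts: an infimum over a smaller set is at least as large as one over a larger set, and a supremum over a larger set is at least as large as one over a smaller set. For the hierarchy \eqref{eq:operational-lower-hierarchy}, I would first note the set inclusions $\Ubb=\Mbb\cap\Lbb\subseteq\Mbb$ and $\Ubb\subseteq\Lbb$. These follow from the definition of $\Ubb_n^m$ in Sec.~\ref{sec:framework}. Then, for any functional $F(\Pcal,\Qcal)$, the infimum over $\Ubb$ dominates the infimum over any superset:
\begin{equation*}
 \inf_{\Qcal\in\Mbb}F(\Pcal,\Qcal)\le\inf_{\Qcal\in\Ubb}F(\Pcal,\Qcal),
 \qquad
 \inf_{\Qcal\in\Lbb}F(\Pcal,\Qcal)\le\inf_{\Qcal\in\Ubb}F(\Pcal,\Qcal).
\end{equation*}
Applying this with $F=G_S$ gives $\NS\le\TS$ and $\RS\le\TS$, hence the maximum is also bounded by $\TS$. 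Applying the same argument with $F=C_S$ gives \eqref{eq:choi-lower-hierarchy} verbatim. The only point I would check is that the infima are taken over the same ambient comb space $\Pbb_n^m$ with identical leg structure, so that the three sets are genuinely nested inside one domain.

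For \eqref{eq:choi-operational-order}, I would use that the Choi probe $\Scal_{\mathrm{Choi}}$ of Fig.~\ref{fig:choi} is a deterministic element of $\Sbb$, with its reference systems counted as the output ancilla, and that it satisfies $\Pcal(\Scal_{\mathrm{Choi}})=\Upsilon^{\Pcal}$ for every $\Pcal$. It then follows that for each fixed pair $(\Pcal,\Qcal)$,
\begin{equation*}
 C_S(\Pcal,\Qcal)=S\!\left[\Pcal(\Scal_{\mathrm{Choi}}),\Qcal(\Scal_{\mathrm{Choi}})\right]\le\sup_{\Scal\in\Sbb}S\!\left[\Pcal(\Scal),\Qcal(\Scal)\right]=G_S(\Pcal,\Qcal).
\end{equation*}
Taking the infimum over $\Qcal$ in the same set $\Xbb\in\{\Mbb,\Lbb,\Ubb\}$ on both sides preserves the inequality, which yields $\CS^{\Xbb}\le\GS^{\Xbb}$ and hence the three stated orderings. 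The same pointwise argument handles an arbitrary restricted class $\Ybb\subseteq\Sbb$: the supremum over $\Ybb$ is at most the supremum over $\Sbb$, and this bound survives the infimum over $\Qcal$. This recovers \eqref{eq:choi-lower-bound} as a special case.

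I expect no real obstacle here. The one subtle step is confirming that $\Scal_{\mathrm{Choi}}$ is an admissible element of $\Sbb$ for every input dimension involved. This requires allowing the retained ancilla to have arbitrary dimension, which the framework explicitly permits. One should also confirm that the normalization of the Choi state matches the output of the contraction, so that $C_S$ is evaluated on exactly the states $\Pcal(\Scal_{\mathrm{Choi}})$ and $\Qcal(\Scal_{\mathrm{Choi}})$.
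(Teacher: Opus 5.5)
Your proposal is correct and follows the paper's proof. The inclusions $\Ubb\subseteq\Mbb$ and $\Ubb\subseteq\Lbb$ yield both hierarchies by comparing infima of $G_S$ (resp.\ $C_S$), and bounding the supremum over $\Sbb$ from below by its value at $\Scal_{\mathrm{Choi}}$ (or by the supremum over any $\Ybb\subseteq\Sbb$), before minimizing over the same free set, gives the Choi--operational ordering; your extra checks on the Choi probe's admissibility and normalization are details the paper leaves implicit via Fig.~\ref{fig:choi}.
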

The proof is given in App.~\ref{app:quantifier-relations}.

\subsection{Interpolation bounds}

To upper-bound total correlations in terms of their temporal and spatial parts, data processing alone is insufficient. The following result uses the triangle inequality and tensor subadditivity of trace distance.

\begin{proposition}[Trace-distance interpolation bound]\label{prop:trace-interpolation}
For any $\Pcal\in\Pbb_n^m$ it holds,
\begin{equation}
 \bm{\rmT}_{T}(\Pcal)
 \le \bm{\rmN}_{T}(\Pcal)+n\qty[\bm{\rmN}_{T}(\Pcal)+\bm{\rmR}_{T}(\Pcal)].
 \label{eq:trace-interpolation-n}
\end{equation}
Interchanging the temporal and spatial factorizations also gives
\begin{equation}
 \bm{\rmT}_{T}(\Pcal)
 \le m\qty[\bm{\rmN}_{T}(\Pcal)+\bm{\rmR}_{T}(\Pcal)]+\bm{\rmR}_{T}(\Pcal).
 \label{eq:trace-interpolation-m}
\end{equation}
\end{proposition}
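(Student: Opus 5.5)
The plan is to build an explicit element of $\Ubb_n^m$ close to $\Pcal$ and then chain triangle inequalities. Two properties of $G_T$ will be used throughout. First, $G_T$ obeys the triangle inequality, since it is a supremum over $\Scal$ of trace distances. Second, $G_T$ is monotone under superprocesses, as established in the proof of Theorem~\ref{thm:monotonicity}. Fix $\varepsilon>0$ and choose a Markovian $\Mcal=\bigotimes_{j=1}^n\Ecal_j\in\Mbb_n^m$ with $G_T(\Pcal,\Mcal)\le\bm{\rmN}_T(\Pcal)+\varepsilon$. Here each $\Ecal_j$ is a channel jointly on all $m$ systems. Also choose a local $\Lcal=\bigotimes_{k=1}^m\Lcal^{X_k}\in\Lbb_n^m$ with $G_T(\Pcal,\Lcal)\le\bm{\rmR}_T(\Pcal)+\varepsilon$. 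Then $G_T(\Mcal,\Lcal)\le \bm{\rmN}_T+\bm{\rmR}_T+2\varepsilon$.

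For \eqref{eq:trace-interpolation-n}, I define for each step $j$ a fixed "extraction" superprocess $\bm{Z}_j$ that maps an $n$-comb to a channel $\Hcal_{\rmi_j}\to\Hcal_{\rmo_j}$. It feeds a fixed product state $\sigma=\bigotimes_k\sigma^{X_k}$ into every input leg other than $\rmi_j$, discards all outputs other than $\rmo_j$, and leaves leg $j$ open. Applied to $\Mcal$, it returns exactly $\Ecal_j$ because of the Markovian factorization. Applied to $\Lcal$, it returns a channel $\Fcal_j=\bigotimes_k\Fcal_j^{X_k}$. This channel is local because both $\Lcal$ and the plugged-in states factorize over $k$. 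By monotonicity of $G_T$ under $\bm{Z}_j$, $\|\Ecal_j-\Fcal_j\|_\diamond/2\le G_T(\Mcal,\Lcal)$. Now set $\Qcal=\bigotimes_j\Fcal_j\in\Ubb_n^m$.

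The key technical step is a hybrid (telescoping) bound:
\begin{equation}
G_T\!\Big(\bigotimes_j\Ecal_j,\bigotimes_j\Fcal_j\Big)\le\sum_{j=1}^n G_T(\Ecal_j,\Fcal_j).
\end{equation}
To prove it, introduce the hybrids $\Hcal^{(r)}=\Fcal_1\otimes\cdots\otimes\Fcal_r\otimes\Ecal_{r+1}\otimes\cdots\otimes\Ecal_n$ and apply the triangle inequality. For consecutive hybrids, which differ only in slot $r$, any probing comb $\Scal$ together with the common channels can be absorbed into a single tester of the slot-$r$ channel. The distinguishability at that step is therefore bounded by the diamond distance, by data processing. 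This yields $G_T(\Mcal,\Qcal)\le n(\bm{\rmN}_T+\bm{\rmR}_T+2\varepsilon)$. Finally,
\begin{equation}
\bm{\rmT}_T(\Pcal)\le G_T(\Pcal,\Mcal)+G_T(\Mcal,\Qcal)\le \bm{\rmN}_T+n(\bm{\rmN}_T+\bm{\rmR}_T)+(2n+1)\varepsilon,
\end{equation}
and letting $\varepsilon\to0$ gives \eqref{eq:trace-interpolation-n}.

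For \eqref{eq:trace-interpolation-m}, I swap the roles of time and space. For each subsystem $X_k$, the extraction superprocess $\bm{Z}^{k}$ feeds fixed states $\sigma^{X_{k'}}$ into all legs of the other subsystems $k'\neq k$ at every step and discards their outputs. Applied to $\Lcal$, it returns exactly $\Lcal^{X_k}$ by the product structure. Applied to $\Mcal$, it returns a single-system Markovian comb $\Mcal^{X_k}=\bigotimes_j\Ecal_j^{X_k}$, where $\Ecal_j^{X_k}$ is obtained from $\Ecal_j$ by inserting $\sigma$ on the other systems and tracing them out. The resulting $\Qcal'=\bigotimes_k\Mcal^{X_k}$ lies in $\Ubb_n^m$. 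The same hybrid argument, now over the $m$ tensor factors of combs, gives $G_T(\Lcal,\Qcal')\le\sum_k G_T(\Lcal^{X_k},\Mcal^{X_k})\le m\,G_T(\Lcal,\Mcal)$. The triangle inequality through $\Lcal$ then yields the second bound. I expect the main obstacle to be making the hybrid step rigorous for combs. The delicate point is checking that, when the factors are parallel subsystems, contracting the probe with the unchanged factors (in particular with the other-subsystem combs, whose causal ordering interleaves with the probe's) produces a legitimate comb-level tester for the single differing factor. This is needed so that data processing of $G_T$ applies at each step.
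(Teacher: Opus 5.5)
Your proposal is correct and follows essentially the same route as the paper. The paper likewise reduces the Markovian and local near-optimizers block by block (or subsystem by subsystem) with a deterministic product-state extraction, bounds each piece by contractivity plus the triangle inequality, reassembles them via telescoping tensor subadditivity of $G_T$, and finishes with one more triangle inequality. Your use of $\varepsilon$-optimizers instead of the paper's compactness-based exact minimizers, and your explicit treatment of the spatial case, are minor refinements. The point you flag is settled as in the paper: since the fixed factor acts on disjoint systems, each of its teeth can be merged into the corresponding tooth of the probe, with its outputs and memory stored in the probe's ancilla.
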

The proof is given in App.~\ref{app:trace-interpolation}. Together with Proposition~\ref{prop:hierarchy}, this shows that simultaneous small temporal and spatial correlations force the total trace-distance correlation to be small.

Relative entropy has no triangle inequality, so the preceding operational proof does not apply. For normalized Choi states, however, the relative-entropy projection onto each product set is the product of the corresponding marginals, which yields the following interpolation bounds.

\begin{proposition}[Relative-entropy Choi interpolation bound]\label{prop:choi-interpolation}
For any $\Pcal\in\Pbb_n^m$ it holds,
\begin{equation}
 \bm{\rmT}_{D}^{\mathrm{Choi}}(\Pcal)
 \le \bm{\rmN}_{D}^{\mathrm{Choi}}(\Pcal)+n\bm{\rmR}_{D}^{\mathrm{Choi}}(\Pcal).
 \label{eq:choi-interpolation-n}
\end{equation}
Equivalently, exchanging the two partitions,
\begin{equation}
 \bm{\rmT}_{D}^{\mathrm{Choi}}(\Pcal)
 \le \bm{\rmR}_{D}^{\mathrm{Choi}}(\Pcal)+m\bm{\rmN}_{D}^{\mathrm{Choi}}(\Pcal).
 \label{eq:choi-interpolation-m}
\end{equation}
\end{proposition}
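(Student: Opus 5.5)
The plan is to reduce all three relative-entropy Choi quantifiers to multipartite mutual informations of the single state $\Upsilon\Def\Upsilon^{\Pcal}$, and then compare them using data processing. View $\Upsilon$ as a state on $nm$ blocks $B_{jk}\Def\Hcal_{\rmo_j}^{X_k}\otimes\Hcal_{\rmi_j}^{X_k}$, with $j$ the step and $k$ the subsystem. Write $\Upsilon_j$ for the marginal on the temporal block $\bigotimes_k B_{jk}$, $\Upsilon^k$ for the marginal on the spatial block $\bigotimes_j B_{jk}$, and $\Upsilon_{jk}$ for the single-block marginal.

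\emph{Step 1: closed forms.} For any partition of the blocks into groups $g$, we have
\begin{equation*}
\inf_{\sigma=\bigotimes_g\sigma_g} D\Big(\Upsilon,\textstyle\bigotimes_g\sigma_g\Big)=\sum_g H(\Upsilon_g)-H(\Upsilon).
\end{equation*}
This is the standard identity $D(\Upsilon,\bigotimes_g\sigma_g)=\sum_g H(\Upsilon_g)-H(\Upsilon)+\sum_g D(\Upsilon_g,\sigma_g)$, so the infimum is attained at the product of marginals. The free sets $\Mbb$, $\Lbb$, $\Ubb$ have Choi states that are products over the temporal, spatial, and single-block partitions respectively, so each Choi quantifier is bounded below by the corresponding expression.

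To obtain equality, I must show that the product of marginals is itself the Choi state of a free process. For $\Mbb$: $\Upsilon_j$ has maximally mixed marginal on the $\rmi_j$ reference, because in the Choi circuit of Fig.~\ref{fig:choi} that reference is half of a maximally entangled pair. Hence $\Upsilon_j$ is the normalized Choi state of a channel $\rmi_j\to\rmo_j$, and $\bigotimes_j\Upsilon_j\in\Mbb$. The same argument applies to each $\Upsilon_{jk}$, giving the claim for $\Ubb$. For $\Lbb$: the causal normalization conditions of a comb are linear, and they involve partial traces over whole legs. Tracing out every leg of the subsystems other than $X_k$ commutes with these conditions, so $\Upsilon^k$ is again an $n$-comb Choi state. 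I expect this check, that the spatial marginal of a multipartite comb is a legitimate comb, to be the only delicate point, and I would verify it step by step from the recursive normalization conditions. It follows that
\begin{equation*}
\bm{\rmN}_{D}^{\mathrm{Choi}}=\sum_j H(\Upsilon_j)-H(\Upsilon),\quad
\bm{\rmR}_{D}^{\mathrm{Choi}}=\sum_k H(\Upsilon^k)-H(\Upsilon),\quad
\bm{\rmT}_{D}^{\mathrm{Choi}}=\sum_{j,k} H(\Upsilon_{jk})-H(\Upsilon).
\end{equation*}

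\emph{Step 2: chain decomposition.} Subtracting the first expression from the third gives
\begin{equation*}
\bm{\rmT}_{D}^{\mathrm{Choi}}-\bm{\rmN}_{D}^{\mathrm{Choi}}=\sum_{j=1}^n I_j,\qquad I_j\Def\sum_k H(\Upsilon_{jk})-H(\Upsilon_j)=D\Big(\Upsilon_j,\textstyle\bigotimes_k\Upsilon_{jk}\Big).
\end{equation*}
Here $I_j$ is the spatial multipartite mutual information within step $j$.

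\emph{Step 3: data processing.} Apply the partial trace over all steps other than $j$, which is a channel. It maps $\Upsilon\mapsto\Upsilon_j$ and $\bigotimes_k\Upsilon^k\mapsto\bigotimes_k\Upsilon_{jk}$, since the trace acts within each factor $\Upsilon^k$. By data processing \eqref{eq:state-DPI},
\begin{equation*}
I_j\le D\Big(\Upsilon,\textstyle\bigotimes_k\Upsilon^k\Big)=\bm{\rmR}_{D}^{\mathrm{Choi}}(\Pcal).
\end{equation*}
Summing over the $n$ steps gives \eqref{eq:choi-interpolation-n}.

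For \eqref{eq:choi-interpolation-m}, exchange the roles of the two partitions. Write $\bm{\rmT}_{D}^{\mathrm{Choi}}-\bm{\rmR}_{D}^{\mathrm{Choi}}=\sum_k D(\Upsilon^k,\bigotimes_j\Upsilon_{jk})$. Tracing out every subsystem other than $X_k$ maps $\bigotimes_j\Upsilon_j\mapsto\bigotimes_j\Upsilon_{jk}$, so each term is at most $\bm{\rmN}_{D}^{\mathrm{Choi}}(\Pcal)$. Summing over the $m$ subsystems finishes the proof.
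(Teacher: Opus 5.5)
Your proof is correct and follows the paper's argument essentially step for step. Both proceed by writing the three Choi quantifiers in closed mutual-information form, using the exact identity $\bm{\rmT}_{D}^{\mathrm{Choi}}-\bm{\rmN}_{D}^{\mathrm{Choi}}=\sum_j D(\Upsilon_j,\bigotimes_k\Upsilon_{jk})$, and applying data processing under the partial trace over the other steps (resp.\ subsystems). Your explicit check that the products of marginals are Choi states of free processes fills in a point the paper only asserts. For the inequalities themselves, though, you only need $\bigotimes_{j,k}\Upsilon_{jk}$ to be the Choi state of a process in $\Ubb$, because $\bm{\rmN}_{D}^{\mathrm{Choi}}$ and $\bm{\rmR}_{D}^{\mathrm{Choi}}$ enter only through lower bounds; the spatial-marginal comb check you flagged as delicate is therefore not essential.
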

The proof is given in App.~\ref{app:choi-interpolation}.

\subsection{Finite-dimensional memory bounds}

Next, we use the fact that the environment dimension is finite to bound the temporal correlations (non-Markovianity). Any process comb admits a sequential realization~\cite{chiribella2009theoretical} with memory systems $E_0,\ldots,E_n$ and CPTP maps
\begin{equation}
    \Ecal_j:\Dcal(\Hcal_{\rmi_j}\otimes\Hcal_{E_{j-1}})\longrightarrow\Dcal(\Hcal_{\rmo_j}\otimes\Hcal_{E_j}),
    \qquad j=1,\ldots,n.
    \label{eq:memory-channel-realization}
\end{equation}
The initial memory is incorporated into $\Ecal_1$, the final memory is discarded, and $E_1,\ldots,E_{n-1}$ form the $n-1$ internal temporal memory links. Accordingly, the results below apply whenever a comb admits a realization of this form with bounded inter-step memory, irrespective of whether that memory is identified with a microscopic environment.

The following result follows from a binary preparation, or reverse-test, construction. Closely related constructions underlie the exact distinguishability-dilution characterization of the max-relative entropy \cite{wang2019resource} and, more generally, the maximal extension of classical divergences to quantum states \cite{gour2020optimal}. We state the result in a form applicable to any state divergence satisfying data processing.
\begin{lemma}[State domination]\label{lem:state-domination}
If $\lambda\rho\le\sigma$ for normalized states and $\lambda\in[0,1]$, then
\begin{equation}
 S(\rho,\sigma)\le b_S(\lambda),
 \label{eq:state-domination-app}
\end{equation}
with the nonincreasing calibration function defined as
\begin{equation}
 b_S(\lambda)\Def S\!\left[\ketbra{0}{0},\lambda\ketbra{0}{0}+(1-\lambda)\ketbra{1}{1} \right].
 \label{eq:binary-calibration}
\end{equation}
In particular, we have $b_D(\lambda)=-\log\lambda$ and $b_T(\lambda)=1-\lambda$.
\end{lemma}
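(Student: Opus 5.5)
The plan is a reverse-test construction: build a single quantum channel $\Phi$ that sends the binary pair $\big(\ketbra{0}{0},\,\lambda\ketbra{0}{0}+(1-\lambda)\ketbra{1}{1}\big)$ to the pair $(\rho,\sigma)$. Data processing~\eqref{eq:state-DPI} then gives $S(\rho,\sigma)\le b_S(\lambda)$ immediately.

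For $\lambda<1$, set
\begin{equation}
 \omega\Def\frac{\sigma-\lambda\rho}{1-\lambda}.
\end{equation}
The hypothesis $\lambda\rho\le\sigma$ makes $\omega$ positive semidefinite, and $\tr\omega=(1-\lambda)/(1-\lambda)=1$, so $\omega$ is a normalized state. Define the measure-and-prepare channel
\begin{equation}
 \Phi(X)\Def\bra{0}X\ket{0}\,\rho+\bra{1}X\ket{1}\,\omega .
\end{equation}
It is entanglement breaking with positive output operators and is trace preserving, hence CPTP. A direct check gives $\Phi(\ketbra{0}{0})=\rho$ and $\Phi\big(\lambda\ketbra{0}{0}+(1-\lambda)\ketbra{1}{1}\big)=\lambda\rho+\sigma-\lambda\rho=\sigma$.

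Two edge cases remain. If $\lambda=1$, then $\sigma-\rho\ge0$ with zero trace, so $\rho=\sigma$; both sides vanish by faithfulness, consistent with $b_S(1)=S(\ketbra{0}{0},\ketbra{0}{0})=0$. The case $\lambda=0$ is covered by the same channel, with $\omega=\sigma$.

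For monotonicity of $b_S$, take $\lambda'\le\lambda$. The binary states themselves satisfy the hypothesis: $\lambda'\ketbra{0}{0}\le\lambda\ketbra{0}{0}+(1-\lambda)\ketbra{1}{1}$. Applying the bound just proved gives $b_S(\lambda)\le b_S(\lambda')$, so $b_S$ is nonincreasing.

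The explicit values follow by direct computation on commuting diagonal states:
\begin{equation}
 b_D(\lambda)=1\cdot\log(1/\lambda)=-\log\lambda,
 \qquad
 b_T(\lambda)=\tfrac12\big(|1-\lambda|+|0-(1-\lambda)|\big)=1-\lambda.
\end{equation}
For $b_D$ at $\lambda=0$, the value is $+\infty$, consistent with the codomain $[0,+\infty]$.

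The only genuine step is checking that $\omega$ is a valid state, and this is exactly where the domination hypothesis $\lambda\rho\le\sigma$ is used; everything else is routine.
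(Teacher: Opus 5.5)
Your proposal is correct and follows essentially the same reverse-test argument as the paper: the same measure-and-prepare channel built from $\omega=(\sigma-\lambda\rho)/(1-\lambda)$, a separate treatment of $\lambda=1$ (you use faithfulness where the paper uses data processing, which is equally valid), and the same explicit calibrations $b_D$ and $b_T$. For monotonicity, you apply the lemma to the pair $\bigl(\ketbra{0}{0},\,\lambda\ketbra{0}{0}+(1-\lambda)\ketbra{1}{1}\bigr)$ with parameter $\lambda'$, which reproduces exactly the paper's binary channel with $r=(\lambda-\lambda')/(1-\lambda')$, so it is the same argument obtained more economically.
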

Equivalently, \(\sigma=\lambda\rho+(1-\lambda)\tau\) for some state
\(\tau\), so \(\lambda\) is the probability with which \(\rho\) occurs
in the binary preparation of \(\sigma\). This lemma is proved in
App.~\ref{app:domination}. In the proofs of the dimension bounds below,
we apply it after constructing a free process \(\Qcal\) satisfying
\(\lambda\Upsilon^{\Pcal}\le\Upsilon^{\Qcal}\); the corresponding
process-level domination result is stated and proved in
App.~\ref{app:free-candidate-domination}.

\subsubsection{Quantum memory}

We first consider the case where the memory may be fully quantum.

\begin{theorem}[Finite-memory bound on temporal correlations]
\label{thm:memory-bound}
Let a process $\Pcal\in\Pbb_n$ have a dilation using $n-1$ environmental memory systems, as described in Eq.~\eqref{eq:memory-channel-realization}, with dimensions $d_j=\dim{\Hcal_{E_j}}$. Then, it holds
\begin{equation}
 \NS(\Pcal)
 \le b_S\!\left(\prod_{j=1}^{n-1}d_j^{-2}\right).
 \label{eq:memory-profile-bound}
\end{equation}
In particular, if $d_j\le d_E$ at every link,
\begin{equation}
 \NS(\Pcal)\le b_S\!\left[d_E^{-2(n-1)}\right].
 \label{eq:environment-bound}
\end{equation}
The same bound holds for $\NS^{\mathrm{Choi}}$ by Eq.~\eqref{eq:choi-operational-order}.
\end{theorem}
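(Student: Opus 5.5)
The strategy is to build, from the given dilation, an explicit Markovian comb $\Qcal\in\Mbb_n$ whose Choi state dominates that of $\Pcal$ with weight $\lambda=\prod_{j=1}^{n-1}d_j^{-2}$. We then pass this domination to the operational level and finish with Lemma~\ref{lem:state-domination}.

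The candidate $\Qcal$ is obtained by cutting every internal memory link $E_j$, $j=1,\ldots,n-1$, and replacing it by a maximally mixed state $\mathbb{1}_{E_j}/d_j$. Concretely, we set
\[
\Qcal_j(\cdot)\Def\tr_{E_j}\!\left[\Ecal_j\!\left(\cdot\otimes \mathbb{1}_{E_{j-1}}/d_{j-1}\right)\right],
\]
with the convention that $\Ecal_1$ already contains the initial memory and $E_n$ is discarded. Each $\Qcal_j$ is CPTP, so $\Qcal=\bigotimes_j\Qcal_j\in\Mbb_n$.

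The key operator inequality comes from writing the identity channel on each link through a maximally entangled teleportation-type decomposition. The identity map $\mathrm{id}_{E}$ has Choi operator $d\,\Phi^+_{EE'}$, normalized as $\tr=d$ in the unnormalized convention. It satisfies $\Phi^+\le \mathbb{1}\otimes\mathbb{1}$, hence
\[
\tfrac{1}{d^2}\,\Phi^+_{EE'}\cdot d\le \tfrac{1}{d}\mathbb{1}_{E}\otimes\mathbb{1}_{E'} .
\]
The right-hand side is the Choi operator of the replacement channel $X\mapsto\tr(X)\mathbb{1}/d$. In other words, $d^{-2}J_{\mathrm{id}}\le J_{\mathrm{replace}}$ as Choi operators of CP maps on the link. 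Because link composition (the link product) preserves the positive-semidefinite order in each argument, substituting this inequality link by link into the link product of the $\Ecal_j$ gives
\[
\lambda\,\Upsilon^{\Pcal}\le\Upsilon^{\Qcal},\qquad \lambda=\prod_{j=1}^{n-1}d_j^{-2},
\]
where both sides use the same normalization, since both are combs on identical legs. I expect this to be the delicate step: one has to check that normalizations match and that the link product respects the order when the other factors are positive. This holds since $A\le B$ implies $\tr_{E}[(A\otimes\mathbb{1})^{T_E}C]$-type contractions with positive $C$ preserve order.

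Next we lift the domination to every probe, which is the process-level domination result of App.~\ref{app:free-candidate-domination}. For any $\Scal\in\Sbb$, $\Pcal(\Scal)$ is a linear, completely positive contraction of $\Upsilon^{\Pcal}$ with the positive Choi operator of $\Scal$. Therefore $\lambda\Upsilon^{\Pcal}\le\Upsilon^{\Qcal}$ implies $\lambda\Pcal(\Scal)\le\Qcal(\Scal)$, and both outputs are normalized states. Lemma~\ref{lem:state-domination} then gives $S[\Pcal(\Scal),\Qcal(\Scal)]\le b_S(\lambda)$ for every $\Scal$. Taking the supremum gives $G_S(\Pcal,\Qcal)\le b_S(\lambda)$, and since $\Qcal\in\Mbb_n$, $\NS(\Pcal)\le b_S(\lambda)$.

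The uniform bound follows because $b_S$ is nonincreasing and $\prod_j d_j^{-2}\ge d_E^{-2(n-1)}$. The Choi statement is immediate from Eq.~\eqref{eq:choi-operational-order}.
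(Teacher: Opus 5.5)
Your proposal is correct and follows essentially the same route as the paper. Both arguments replace each internal memory link by the completely depolarizing channel to get the Markovian candidate $\Pcal^\Delta$, and both use $\Upsilon^{\Ical_{E_j}}\le d_j^{2}\,\Upsilon^{\Delta_{E_j}}$ on each link, pushed through the order-preserving completely positive contraction, before concluding with free-candidate domination and Lemma~\ref{lem:state-domination}. Your explicit link-product argument for order preservation is a more concrete version of the paper's remark that the opened-wire comb map $\Ncal$ is completely positive.
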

This is proved together with the classical refinement in App.~\ref{app:memory-bounds}. For the two principal divergences,
\begin{align}
 \bm{\rmN}_{D}(\Pcal)&\le2(n-1)\log d_E,
 \label{eq:D-quantum-memory-bound}\\
 \bm{\rmN}_{T}(\Pcal)&\le1-d_E^{-2(n-1)}.
 \label{eq:T-quantum-memory-bound}
\end{align}
This bound may be interpreted as the dimension of the memory limiting how much the correlations between different time steps may grow.

\subsubsection{Classical memory}

Sometimes, the process may be dilated with a memory that is classical, as we define below.
\begin{definition}[Classical memory]\label{def:classical-memory}
A realization of Eq.~\eqref{eq:memory-channel-realization} has classical memory if, for every inter-step memory $E_j$, there is a fixed basis $\{\ket a\}_{a=1}^{d_j}$ such that inserting the complete dephasing channel
\begin{equation}
 \Zcal_{E_j}(X)
 =\sum_{a=1}^{d_j}\ketbra{a}{a}X\ketbra{a}{a}
 \label{eq:dephasing-channel}
\end{equation}
simultaneously on all inter-step memory links leaves the process unchanged for every compatible probing comb.
\end{definition}
When the memory is classical, the finite-dimension bounds may be refined.
\begin{theorem}[Classical-memory refinement]
\label{thm:classical-memory-bound}
If $\Pcal\in\Pbb_n$ admits the realization of Definition~\ref{def:classical-memory}, then
\begin{equation}
 \NS(\Pcal)
 \le b_S\!\left(\prod_{j=1}^{n-1}d_j^{-1}\right).
 \label{eq:classical-profile-bound}
\end{equation}
If $d_j\le d_E$ for every cut,
\begin{equation}
 \NS(\Pcal)\le b_S\!\left[d_E^{-(n-1)}\right].
 \label{eq:classical-environment-bound}
\end{equation}
Again, $\NS^{\mathrm{Choi}}$ obeys the same ceiling.
\end{theorem}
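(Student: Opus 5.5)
Following the route announced in the text, we will construct a Markovian comb $\Qcal\in\Mbb_n$ whose Choi operator dominates a scaled copy of $\Upsilon^{\Pcal}$. We will then pass from Choi domination to domination of outputs for every probe, and conclude with Lemma~\ref{lem:state-domination}.

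\emph{Construction of the free candidate.} Take the classical realization of Definition~\ref{def:classical-memory}. Because dephasing every link leaves $\Pcal$ unchanged, we may write $\Pcal$ as a sum over classical trajectories $\vec a=(a_1,\ldots,a_{n-1})$, $a_j\in\{1,\ldots,d_j\}$:
\begin{equation}
 \Upsilon^{\Pcal}=\sum_{\vec a}\Upsilon^{\Ecal_1^{(a_1)}}\otimes\Upsilon^{\Ecal_2^{(a_1,a_2)}}\otimes\cdots\otimes\Upsilon^{\Ecal_n^{(a_{n-1})}} .
\end{equation}
Here $\Ecal_j^{(a_{j-1},a_j)}(X)=\bra{a_j}\Ecal_j(X\otimes\ketbra{a_{j-1}}{a_{j-1}})\ket{a_j}$ is a CP map. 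The crucial point is that it is trace nonincreasing and satisfies $\sum_{a_j}\Ecal_j^{(a_{j-1},a_j)}$ trace preserving for each $a_{j-1}$. Define
\begin{equation}
 \Qcal_j=\frac{1}{d_{j-1}}\sum_{a_{j-1},a_j}\Ecal_j^{(a_{j-1},a_j)},
\end{equation}
where the prefactor is absent for $j=1$. Each $\Qcal_j$ is then CPTP, being an average over $a_{j-1}$ of CPTP maps. Hence $\Qcal=\Qcal_1\otimes\cdots\otimes\Qcal_n\in\Mbb_n$.

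Expanding $\Upsilon^{\Qcal}$ gives a sum over all pairs of in- and out-labels at every link, so it contains every diagonal term $a_j^{\rm out}=a_j^{\rm in}$ appearing in $\Upsilon^{\Pcal}$. Those terms carry the weight $\prod_j d_j^{-1}$. All other terms are positive. We therefore expect
\begin{equation}
 \Big(\prod_{j=1}^{n-1}d_j^{-1}\Big)\Upsilon^{\Pcal}\le\Upsilon^{\Qcal}.
\end{equation}
Checking this operator inequality carefully is the step to get right. The difference is a sum of tensor products of positive Choi operators, so positivity should be immediate.

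\emph{Lifting to the operational quantifier.} Next we invoke the process-level domination result announced after Lemma~\ref{lem:state-domination}. If $\lambda\Upsilon^{\Pcal}\le\Upsilon^{\Qcal}$, then for every $\Scal\in\Sbb_n$ we have $\lambda\Pcal(\Scal)\le\Qcal(\Scal)$. This holds because the link product with the positive Choi operator of $\Scal$ is a positive linear map, and both outputs are normalized states by causality. Lemma~\ref{lem:state-domination} then gives $S[\Pcal(\Scal),\Qcal(\Scal)]\le b_S(\lambda)$ for every $\Scal$. Taking the supremum over $\Scal$ and then the infimum over $\Mbb_n$ yields Eq.~\eqref{eq:classical-profile-bound}.

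Eq.~\eqref{eq:classical-environment-bound} follows because $b_S$ is nonincreasing. The Choi statement follows from Eq.~\eqref{eq:choi-operational-order}.

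\emph{Main obstacle.} The delicate step is the decomposition of $\Upsilon^{\Pcal}$ into classical trajectories. Definition~\ref{def:classical-memory} only asserts invariance of the process under dephasing, not that the given maps are themselves dephased. We handle this by replacing each $\Ecal_j$ with $(\mathrm{id}\otimes\Zcal_{E_j})\circ\Ecal_j\circ(\mathrm{id}\otimes\Zcal_{E_{j-1}})$, which leaves $\Pcal$ unchanged. We must also confirm that the $j=1$ and $j=n$ boundary factors are handled correctly: the initial memory is absorbed into $\Ecal_1$ and the final memory is traced out.

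Compared with the quantum case, the saving is one factor of $d_j$ per link. This reflects the fact that only diagonal blocks, rather than full matrix blocks, must be covered by the replace-and-average construction.
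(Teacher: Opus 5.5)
Your proposal is correct and is essentially the paper's proof. Your $\Qcal_j$ is exactly $\Ecal_j$ fed with the maximally mixed memory $\Id_{E_{j-1}}$ and with $E_j$ discarded, i.e.\ the paper's Markovian candidate $\Pcal^\Delta=\Ncal(\Delta_E)$, and you use the same constant $\prod_j d_j^{-1}$ and the same lift through Corollary~\ref{cor:free-candidate-domination} and Lemma~\ref{lem:state-domination}. The only difference is cosmetic: you verify the Choi domination by expanding into classical trajectories, whereas the paper proves it link by link via $\Upsilon^{\Zcal_{E_j}}\le d_j\Upsilon^{\Delta_{E_j}}$ and pushes this through the completely positive open-wire comb map $\Ncal$; your diagonal terms are precisely the $a=b$ components of $\mathbb I_{E_j^{\rm o}E_j^{\rm i}}=\sum_{a,b}\ketbra{ab}{ab}$.
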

This is proved together with the quantum-memory bound in App.~\ref{app:memory-bounds}. For the two principal divergences,
\begin{align}
 \bm{\rmN}_{D}(\Pcal)&\le(n-1)\log d_E,
 \label{eq:D-classical-memory-bound}\\
 \bm{\rmN}_{T}(\Pcal)&\le1-d_E^{-(n-1)}.
 \label{eq:T-classical-memory-bound}
\end{align}
Again, the memory dimension limits the scaling, while a classical link can transmit less correlation than a quantum link of the same dimension.

\subsection{Universal system ceilings}
\label{subsec:system-ceilings}

The memory bounds derived above constrain temporal correlations whenever a finite-dimensional memory realization is available.  We now derive independent ceilings that hold for every valid process tensor and require no assumption about its environmental realization.  Let \(d_S\) be the dimension of the complete system carrier at each step, including all of its spatial subsystems, and suppose for simplicity that all input and output carriers have this same dimension.

\begin{theorem}[Universal system ceilings]
\label{thm:system-ceilings}
For any $\Pcal\in\Pbb_n$ with system dimension $d_S$ it holds,
\begin{align}
 \NS(\Pcal)
 &\le b_S\!\left[d_S^{-2(n-1)}\right],
 \label{eq:N-system-ceiling}\\
 \RS(\Pcal)
 &\le b_S\!\left(d_S^{-2n}\right),
 \label{eq:R-system-ceiling}\\
 \TS(\Pcal)
 &\le b_S\!\left(d_S^{-2n}\right).
 \label{eq:T-system-ceiling}
\end{align}
The same bounds hold for the corresponding Choi quantifiers.  Moreover, for every restricted class of probing combs \(\Ybb\subseteq\Sbb\),
\begin{equation}
 \ES^{\Ybb}(\Pcal)
 \le\TS(\Pcal)
 \le b_S\!\left(d_S^{-2n}\right).
 \label{eq:E-system-ceiling}
\end{equation}
\end{theorem}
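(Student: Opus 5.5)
The plan is to reduce every inequality to the state-domination mechanism of Lemma~\ref{lem:state-domination}, lifted to processes by the free-candidate domination result (App.~\ref{app:free-candidate-domination}). That result says: if a free process $\Qcal$ satisfies $\lambda\Upsilon^{\Pcal}\le\Upsilon^{\Qcal}$, then $\GS(\Pcal,\Qcal)\le b_S(\lambda)$.

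The reason it works is that contraction with any probing comb $\Scal$ is completely positive and linear in the Choi operator. Hence $\lambda\Pcal(\Scal)\le\Qcal(\Scal)$ for every $\Scal$, with both outputs normalized. Lemma~\ref{lem:state-domination} then bounds $S[\Pcal(\Scal),\Qcal(\Scal)]$ by $b_S(\lambda)$ uniformly in $\Scal$. Taking the supremum over $\Scal$ and then the infimum over $\Xbb$ gives $\GS^{\Xbb}(\Pcal)\le b_S(\lambda)$. So for each ceiling I only need to exhibit a suitable $\Qcal$ in the relevant free set and a value of $\lambda$.

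\textbf{Temporal bound~\eqref{eq:N-system-ceiling}.} I will not build a new candidate. Any comb admits a sequential realization~\cite{chiribella2009theoretical} in which the memory can be chosen to be a copy of the input history. A more economical choice is a Stinespring-type sequential dilation where the memory after step $j$ has dimension at most $d_S$ per link, after discarding what is not needed. Concretely, I would use the minimal-memory realization obtained by purifying the Choi operator and passing the "reference" forward. The point to check is whether each internal link $E_j$ can be taken with dimension $d_S$. Teleportation-style reasoning suggests yes: the future can only depend on the past through what the system legs later carry, and a memory as large as the full Choi space of past legs could always be compressed.

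If this works, Theorem~\ref{thm:memory-bound} with $d_E=d_S$ gives $b_S[d_S^{-2(n-1)}]$ directly. If the compression argument fails, the fallback is a direct construction. Take $\Qcal=\bigotimes_j \Ecal_j$ with $\Ecal_j$ the completely depolarizing channel to the maximally mixed state. Its Choi operator is $\Upsilon^{\Qcal}=\mathbb{I}/d_S^{2n}$, which dominates any normalized $\Upsilon^{\Pcal}$ with $\lambda=d_S^{-2n}$ because $\Upsilon^{\Pcal}\le\mathbb{I}$. That already yields the weaker exponent $2n$. The improvement to $2(n-1)$ would come from keeping the last-step marginal (or the first-step channel) exactly: use $\Qcal$ equal to a product of the marginal channels of $\Pcal$. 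The domination constant would then come from an operator inequality of the form $\Upsilon^{\Pcal}\le d_S^{2}\,\Upsilon^{A}\otimes\mathbb{I}$ across each cut, iterated over the $n-1$ cuts.

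\textbf{Spatial and total bounds~\eqref{eq:R-system-ceiling}–\eqref{eq:T-system-ceiling}.} These follow from the fully depolarizing candidate above. It is simultaneously Markovian and a product over subsystems, so it lies in $\Ubb\subseteq\Lbb$. Since $\Upsilon^{\Pcal}$ is a normalized state on a space of dimension $d_S^{2n}$, we have $\Upsilon^{\Pcal}\le\mathbb{I}=d_S^{2n}\Upsilon^{\Qcal}$. This gives $\lambda=d_S^{-2n}$, so both $\RS$ and $\TS$ are at most $b_S(d_S^{-2n})$.

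\textbf{Choi versions and restricted probes.} The Choi versions follow from Proposition~\ref{prop:hierarchy}. Alternatively, apply Lemma~\ref{lem:state-domination} directly to the Choi states. Eq.~\eqref{eq:E-system-ceiling} is the observation $\Ybb\subseteq\Sbb$ combined with the $\TS$ ceiling.

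The main obstacle is the sharper temporal exponent $2(n-1)$, i.e.\ justifying a free Markovian candidate that gains the factor $d_S^2$. I would first try the operator inequality $\rho_{AB}\le d_A\,\mathbb{I}_A\otimes\rho_B$, applied with the causal structure. The relevant block is a channel's Choi operator with input dimension $d_S$, whose reduced state is $\mathbb{I}/d_S$, which should supply the saving at one end of the comb.
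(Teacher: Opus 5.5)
Your treatment of $\RS$, $\TS$, the Choi versions and $\ES^{\Ybb}$ is sound. The one-line bound $\Upsilon^{\Pcal}\le\mathbb{I}=d_S^{2n}\Upsilon^{\Qcal}$ is a shortcut to the same fully depolarizing candidate that the paper reaches by running its recursion down to $k=1$. The gap is the temporal ceiling with exponent $2(n-1)$.

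Your primary route rests on a false claim. A comb need not admit a realization with $d_j\le d_S$ on every link, nor even one with $\prod_j d_j\le d_S^{n-1}$, so Theorem~\ref{thm:memory-bound} cannot reproduce the ceiling. Take $n=3$: step~1 stores $i_1$ and emits $o_1$ maximally entangled with a memory qudit; step~2 releases the stored input as $o_2$; step~3 releases the partner of $o_1$ as $o_3$; later inputs are discarded. The Choi state contains $\Phi_{i_1o_2}\otimes\Phi_{o_1o_3}$, i.e.\ $2\log d_S$ ebits across the cut after step~1. Everything on the future side is generated from $E_1$ by operations local to that side. Hence every realization has $d_1\ge d_S^2$ and $d_2\ge d_S$, and the memory-profile bound gives at best $b_S(d_S^{-6})$ rather than $b_S(d_S^{-4})$. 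The ``teleportation'' intuition fails because the future may need past quantum information, including purifications of past outputs, that is released only several steps later.

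Your fallback names the right lemma but neither the right candidate nor the derivation. A product of the marginal channels of $\Pcal$ admits no dimension-only domination constant. For a two-step comb that samples a bit with bias $\epsilon$ and emits it at both steps, $\lambda\Upsilon^{\Pcal}\le\Upsilon_{T_1}\otimes\Upsilon_{T_2}$ forces $\lambda\le\epsilon$. Your cut inequality $\Upsilon^{\Pcal}\le d_S^2\,\Upsilon^{A}\otimes\mathbb{I}$ is correct only with $\mathbb{I}$ normalized (with the unnormalized identity the constant is $1$), and it has to be derived, not assumed.

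The missing step is to keep the original first step and peel off the last one. Let $\Upsilon^{(k)}$ be the normalized Choi state of the first $k$ steps. Apply Lemma~\ref{lem:marginal-domination} with $A=o_k$ alone, then use the causal normalization $\tr_{o_k}\Upsilon^{(k)}=\Id_{i_k}\otimes\Upsilon^{(k-1)}$. This gives
\[
\Upsilon^{(k)}\le d_S\,\mathbb{I}_{o_k}\otimes\tr_{o_k}\Upsilon^{(k)}=d_S^{2}\,\Id_{o_k}\otimes\Id_{i_k}\otimes\Upsilon^{(k-1)} .
\]
Iterating from $k=n$ down to $k=2$ yields $d_S^{-2(n-1)}\Upsilon^{\Pcal}\le\Upsilon^{(1)}\otimes\bigotimes_{k=2}^{n}\Id_{o_k}\otimes\Id_{i_k}$. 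The right-hand side is the Choi state of a Markovian process, and Corollary~\ref{cor:free-candidate-domination} then gives Eq.~\eqref{eq:N-system-ceiling}.
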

This is proved in App.~\ref{app:system-ceilings}. The different exponents in Theorem~\ref{thm:system-ceilings} have a direct structural origin.  To construct a Markovian approximation, the first one-step block may remain arbitrary, and only the remaining \(n-1\) temporal blocks need to be replaced.  That first block may, however, still contain spatial correlations.  Obtaining a spatially local or fully uncorrelated candidate therefore requires replacing all \(n\) blocks, producing one additional factor \(d_S^2\).

For relative entropy, Theorem~\ref{thm:system-ceilings} becomes
\begin{align}
 \bm{\mathrm N}_{D}(\Pcal)&\le2(n-1)\log d_S,
 \label{eq:ND-system-ceiling}\\
 \bm{\mathrm R}_{D}(\Pcal)&\le2n\log d_S,
 \label{eq:RD-system-ceiling}\\
 \bm{\mathrm T}_{D}(\Pcal)&\le2n\log d_S,
 \label{eq:TD-system-ceiling}
\end{align}
where Eq.~\eqref{eq:ND-system-ceiling} generalizes Thm.~1 of Ref.~\cite{zambon2024relations} to the optimized operational quantifier, whereas trace distance gives
\begin{align}
 \bm{\mathrm N}_{T}(\Pcal)&\le1-d_S^{-2(n-1)},
 \label{eq:NT-system-ceiling}\\
 \bm{\mathrm R}_{T}(\Pcal)&\le1-d_S^{-2n},
 \label{eq:RT-system-ceiling}\\
 \bm{\mathrm T}_{T}(\Pcal)&\le1-d_S^{-2n}.
 \label{eq:TT-system-ceiling}
\end{align}
Combining Eq.~\eqref{eq:N-system-ceiling} with the quantum-memory and classical-memory bounds derived above yields, respectively,
\begin{align}
 \NS(\Pcal)
 &\le b_S\!\left[\min\{d_E,d_S\}^{-2(n-1)}\right],
 \label{eq:N-combined-quantum}\\
 \NS(\Pcal)
 &\le b_S\!\left[\min\{d_E,d_S^2\}^{-(n-1)}\right].
 \label{eq:N-combined-classical}
\end{align}
Consequently, for relative entropy and trace distance an arbitrary quantum-memory bound is stronger than the universal system ceiling precisely in the regime \(d_E<d_S\), while a classical-memory bound remains informative throughout the larger regime \(d_E<d_S^2\).  In contrast, the dimension of the temporal memory alone does not generally constrain \(\RS\) or \(\TS\), since each one-step block may contain arbitrary spatial correlations.

\subsection{Certification consequences}
\label{subsec:certification}

The preceding inequalities may be inverted into certification statements. Let \(W_S^{\rm N}\), \(W_S^{\rm R}\), and \(W_S^{\rm T}\) be experimentally or numerically certified lower bounds satisfying
\begin{equation}
 W_S^{\rm N}\le\NS(\Pcal),\qquad 
 W_S^{\rm R}\le\RS(\Pcal),\qquad
 W_S^{\rm T}\le\TS(\Pcal).
 \label{eq:certified-lower-bounds}
\end{equation}
They may be obtained from any single probing comb or any restricted probing class; the Choi probe is only one particular choice.  Define the generalized inverse of the nonincreasing calibration function by
\begin{equation}
 b_S^{\leftarrow}(w)
 \Def\sup\{\lambda\in[0,1]:b_S(\lambda)\ge w\}.
 \label{eq:b-generalized-inverse}
\end{equation}

\begin{corollary}[Memory and system-dimension certification]
\label{cor:certification}
For \(n\ge2\), a certified temporal value \(W_S^{\rm N}\) implies, for arbitrary quantum and classical memory respectively,
\begin{align}
 d_E&\ge
 \left\lceil
 \left[b_S^{\leftarrow}(W_S^{\rm N})\right]^{-1/[2(n-1)]}
 \right\rceil,
 \label{eq:quantum-dimension-witness}\\
 d_E&\ge
 \left\lceil
 \left[b_S^{\leftarrow}(W_S^{\rm N})\right]^{-1/(n-1)}
 \right\rceil.
 \label{eq:classical-dimension-witness}
\end{align}
The system ceilings similarly imply
\begin{align}
 d_S&\ge
 \left\lceil
 \left[b_S^{\leftarrow}(W_S^{\rm N})\right]^{-1/[2(n-1)]}
 \right\rceil,
 \label{eq:system-N-witness}\\
 d_S&\ge
 \left\lceil
 \left[b_S^{\leftarrow}(W_S^{\rm C})\right]^{-1/(2n)}
 \right\rceil,
 \qquad \mathrm C\in\{\mathrm R,\mathrm T\}.
 \label{eq:system-RT-witness}
\end{align}
\end{corollary}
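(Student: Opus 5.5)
The plan is to invert the ceilings of Theorems~\ref{thm:memory-bound}, \ref{thm:classical-memory-bound} and \ref{thm:system-ceilings} using only the monotonicity of $b_S$ and the definition of its generalized inverse. The key elementary fact is the following: if $b_S(\mu)\ge w$ for some $\mu\in[0,1]$, then $\mu$ belongs to the set $\{\lambda\in[0,1]:b_S(\lambda)\ge w\}$. Hence $\mu\le b_S^{\leftarrow}(w)$ by Eq.~\eqref{eq:b-generalized-inverse}. No continuity of $b_S$ is needed for this direction, only the definition of the supremum.

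For the quantum-memory witness, take any realization of the form~\eqref{eq:memory-channel-realization} with all links of dimension at most $d_E$. Chaining Eq.~\eqref{eq:certified-lower-bounds} with Eq.~\eqref{eq:environment-bound} gives $W_S^{\rm N}\le \NS(\Pcal)\le b_S[d_E^{-2(n-1)}]$. The key fact with $\mu=d_E^{-2(n-1)}\in(0,1]$ then yields $d_E^{-2(n-1)}\le b_S^{\leftarrow}(W_S^{\rm N})$.

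When $b_S^{\leftarrow}(W_S^{\rm N})>0$, raising both sides to the negative power $-1/[2(n-1)]$ reverses the inequality, which is well defined since $n\ge2$. This gives $d_E\ge [b_S^{\leftarrow}(W_S^{\rm N})]^{-1/[2(n-1)]}$. Because $d_E$ is a positive integer, we may take the ceiling, which is Eq.~\eqref{eq:quantum-dimension-witness}.

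The classical case is identical, using Eq.~\eqref{eq:classical-environment-bound} with exponent $n-1$, and yields Eq.~\eqref{eq:classical-dimension-witness}. The system witnesses follow in the same way. Eq.~\eqref{eq:N-system-ceiling} gives Eq.~\eqref{eq:system-N-witness}. Eqs.~\eqref{eq:R-system-ceiling} and \eqref{eq:T-system-ceiling}, applied with $W_S^{\rm R}$ and $W_S^{\rm T}$ respectively, give $d_S^{-2n}\le b_S^{\leftarrow}(W_S^{\rm C})$ and hence Eq.~\eqref{eq:system-RT-witness}.

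The only delicate point, and the one I expect to need care, is the degenerate case $b_S^{\leftarrow}(W)=0$, where the negative power is $+\infty$. Here the statement should be read as follows: no finite dimension is compatible with the bound, so the corresponding finite-memory or finite-dimensional model is excluded. This is the "inconsistency" reading mentioned in the introduction.

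The positive $\mu$ always lies in the defining set, so $b_S^{\leftarrow}(W)\ge\mu>0$ whenever any finite-dimensional model exists. This shows that the case $b_S^{\leftarrow}(W)=0$ genuinely excludes such models, rather than producing a vacuous conclusion. I would state this convention explicitly, with $0^{-p}\Def+\infty$.
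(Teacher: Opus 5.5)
Your proposal is correct and follows essentially the same route as the paper. Both chain the certified lower bound with the relevant ceiling (Theorems~\ref{thm:memory-bound}, \ref{thm:classical-memory-bound}, \ref{thm:system-ceilings}) to get $W\le b_S(\mu)$, then use the definition of $b_S^{\leftarrow}$ to get $\mu\le b_S^{\leftarrow}(W)$ and invert the power. Your explicit handling of the integrality of the dimension and of the degenerate case $b_S^{\leftarrow}(W)=0$ is a careful addition that the paper leaves implicit; as you note, that case cannot occur for a genuine finite-dimensional model, since $\mu>0$ lies in the defining set.
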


This is proved in App.~\ref{app:certification}. For relative entropy, Eqs.~\eqref{eq:quantum-dimension-witness} and \eqref{eq:classical-dimension-witness} read
\begin{align}
 d_E&\ge
 \left\lceil2^{W_D^{\rm N}/[2(n-1)]}\right\rceil,
 \label{eq:D-quantum-dimension-witness}\\
 d_E&\ge
 \left\lceil2^{W_D^{\rm N}/(n-1)}\right\rceil,
 \label{eq:D-classical-dimension-witness}
\end{align}
while for trace distance, whenever \(W_T^{\rm N}<1\), they become
\begin{align}
 d_E&\ge
 \left\lceil(1-W_T^{\rm N})^{-1/[2(n-1)]}\right\rceil,
 \label{eq:trace-quantum-dimension-witness}\\
 d_E&\ge
 \left\lceil(1-W_T^{\rm N})^{-1/(n-1)}\right\rceil.
 \label{eq:trace-classical-dimension-witness}
\end{align}

Suppose that independent physical information gives an upper bound \(d_E\le d_{\max}\).  Then
\begin{equation}
 W_S^{\rm N}>b_S\!\left[d_{\max}^{-(n-1)}\right]
 \label{eq:general-nonclassicality-witness}
\end{equation}
rules out every classical-memory realization of dimension at most \(d_{\max}\).  In particular,
\begin{align}
 W_D^{\rm N}&>(n-1)\log d_{\max},
 \label{eq:D-nonclassicality-witness}\\
 W_T^{\rm N}&>1-d_{\max}^{-(n-1)}
 \label{eq:trace-nonclassicality-witness}
\end{align}
are relative-entropy and trace-distance witnesses of nonclassical temporal memory under this dimension assumption.  Likewise,
\begin{equation}
 W_S^{\rm N}>b_S\!\left[d_{\max}^{-2(n-1)}\right]
 \label{eq:general-quantum-memory-exclusion}
\end{equation}
rules out every quantum-memory realization of dimension at most \(d_{\max}\).

The system ceilings have a different interpretation.  If the nominal system dimension is known to be \(d_S\), any violation of Eqs.~\eqref{eq:N-system-ceiling}--\eqref{eq:T-system-ceiling} cannot be explained by increasing the environment dimension.  It instead falsifies at least one assumption entering the process model, such as the chosen system Hilbert space, the number or placement of temporal steps, or the validity of the certified lower bound.  Leakage outside the computational subspace is a particularly relevant physical mechanism.  Finally, neither \(W_S^{\rm R}\) nor \(W_S^{\rm T}\) alone lower bounds \(d_E\) without additional structural assumptions: purely spatial correlations may already be present inside a single temporal block.

\section{Superconducting-qubit memory models}
\label{sec:applications}

We now illustrate the memory bounds in effective models relevant to superconducting devices.  The system consists of two qubits, labeled $A$ and $B$, and the same system--environment interaction is applied in two consecutive temporal blocks, with an arbitrary control operation allowed between them, implying the noise will be described by a comb $\Pcal_{\theta}\in\Pbb_{2}^{2}$, where $\theta$ is a tunable parameter of the dynamics, determined by the time interval between operations.  We evaluate the relative-entropy Choi temporal correlations
\begin{equation}
 \bm{\rmN}_{D}^{\mathrm{Choi}}(\Pcal_\theta)
 =I(T_1:T_2)_{\Upsilon^{\Pcal_\theta}},
 \label{eq:application-choi-N}
\end{equation}
where $T_j$ contains all normalized-Choi legs associated with temporal block $j$.  Since the Choi preparation is an admissible probing comb,
\begin{equation}
 \bm{\rmN}_{D}^{\mathrm{Choi}}(\Pcal_\theta)
 \le \bm{\rmN}_{D}(\Pcal_\theta).
 \label{eq:application-choi-lower}
\end{equation}
It therefore provides an experimentally meaningful lower bound on the operational temporal correlations.

For two system qubits, $d_S=4$, the system ceiling for a two-step process is
\begin{equation}
 \bm{\rmN}_{D}(\Pcal_\theta)\le2\log d_S=4.
 \label{eq:application-system-ceiling}
\end{equation}
Unlike the environment-dependent bounds, Eq.~\eqref{eq:application-system-ceiling} follows solely from comb causality and holds independently of the dimension, structure, and classicality of the environment.  It is therefore the absolute temporal-correlation ceiling compatible with the assumed two-step process on the two-qubit system Hilbert space. The quantum- and classical-memory bounds for the two environment sizes considered below are:
\begin{center}
\begin{tabular}{c@{\qquad}c@{\qquad}c}
\toprule
$d_E$ & quantum memory & classical memory\\
\midrule
$2$ & $2\log d_E=2$ & $\log d_E=1$\\
$4$ & $2\log d_E=4$ & $\log d_E=2$\\
\bottomrule
\end{tabular}
\end{center}
Thus, for a single environmental qubit both bounds are strictly stronger than the system ceiling, whereas a two-qubit quantum environment has enough memory capacity to reach it.

\subsection{A common environmental qubit}
\label{subsec:common-environment}

Consider first a single persistent environmental qubit \(E\) coupled to both \(A\) and \(B\). Effective dispersive cross-Kerr and resonant-exchange reductions of circuit-QED and transmon--TLS Hamiltonians motivate the following models \cite{blais2021circuit,zuk2024robust}. In a rotating frame, after removing local frequency shifts, we write
\begin{align}
 H_{ZZ}^{(1E)}
 &=J_A Z_AZ_E+J_B Z_BZ_E,
 \label{eq:common-ZZ}\\
 H_{XY}^{(1E)}
 &=g_A\left(\sigma_A^+\sigma_E^-+\sigma_A^-\sigma_E^+\right)\nonumber\\
 &\quad+g_B\left(\sigma_B^+\sigma_E^-+\sigma_B^-\sigma_E^+\right).
 \label{eq:common-XY}
\end{align}
The first Hamiltonian follows from an effective dispersive interaction \(\chi_A Z_A n_E+\chi_B Z_B n_E\): restricting \(E\) to its two lowest states, using \(n_E=(\mathbb I_E\mp Z_E)/2\), and absorbing the resulting local \(Z_A\) and \(Z_B\) shifts, signs, and numerical factors into \(J_A\) and \(J_B\) gives Eq.~\eqref{eq:common-ZZ}. The second Hamiltonian is the rotating-wave, two-level reduction of near-resonant transverse excitation exchange. These effective reductions are applicable when \(E\) is a parasitic two-level system or a sufficiently anharmonic ancillary qubit or coupler, as supported experimentally for coherent defects in superconducting qubits \cite{liu2024observation}. A nearly harmonic circuit mode admits the same two-level description only when population outside its two lowest levels remains negligible. Static \(ZZ\) crosstalk and tunable exchange interactions occur in superconducting circuits \cite{mundada2019suppression,zhao2021zz,sung2021iswap,foxen2020fsim}. We take symmetric couplings \(J_A=J_B=J\) and \(g_A=g_B=g\), and write the interaction angle as \(\theta=Jt\) or \(\theta=gt\).

For Eq.~\eqref{eq:common-ZZ}, we initialize the environment in \(\mathbb I_E/2\). This state represents an equal-population classical binary fluctuator, a driven or intentionally randomized ancillary degree of freedom, or a correlation stress test. The interaction commutes with \(Z_E\), so dephasing \(E\) in the computational basis at the temporal cut leaves the process invariant. The transmitted memory is therefore classical and
\begin{equation}
 \bm{\rmN}_{D}^{\mathrm{Choi}}(\Pcal_\theta)
 \le\bm{\rmN}_{D}(\Pcal_\theta)\le1,
 \label{eq:common-ZZ-bound}
\end{equation}
as shown by the solid blue curve in Fig.~\subref{fig:classical-ZZ}. The Choi quantity reaches one bit at $\theta=\pi/4$, proving that the physical common-$ZZ$ model saturates the classical single-qubit-memory bound.

For Eq.~\eqref{eq:common-XY}, we take the initial state $\ketbra{0}{0}_E$.  Excitation exchange coherently transfers quantum information through $E$, and the classical-memory restriction no longer applies.  The resulting Choi temporal correlations cross the classical ceiling of one bit and approach, but do not attain, the quantum ceiling of two bits, which is illustrated by the solid blue curve of Fig.~\subref{fig:quantum-XY}.  Consequently,
\begin{equation}
 \bm{\rmN}_{D}^{\mathrm{Choi}}(\Pcal_\theta)>1
 \label{eq:common-nonclassical-witness}
\end{equation}
certifies that no classical single-qubit memory can reproduce the process. This is a one-sided witness: values below one do not imply that the memory is classical.

\subsection{Increasing the memory dimension}
\label{subsec:two-environments}

To expose the dimension dependence, let $E=E_AE_B$ consist of two persistent environmental qubits and consider
\begin{align}
 H_{ZZ}^{(2E)}
 &=J_AZ_AZ_{E_A}+J_BZ_BZ_{E_B},
 \label{eq:two-ZZ}\\
 H_{XY}^{(2E)}
 &=g_A\left(\sigma_A^+\sigma_{E_A}^-+\sigma_A^-\sigma_{E_A}^+\right)\nonumber\\
 &\quad+g_B\left(\sigma_B^+\sigma_{E_B}^-+\sigma_B^-\sigma_{E_B}^+\right).
 \label{eq:two-XY}
\end{align}
For the product initial memory states considered below, the two interactions act independently on disjoint pairs, so the process factorizes spatially as $\Pcal_\theta^{AB}=\Pcal_\theta^A\otimes\Pcal_\theta^B$.  This model is not intended as an example of common-mode spatial crosstalk; rather, it isolates the increase in temporal-memory capacity when $d_E$ grows from $2$ to $4$.

With $\rho_E=\mathbb I_{E_AE_B}/4$, Eq.~\eqref{eq:two-ZZ} transmits two independent classical bits.  At $\theta=\pi/4$,
\begin{equation}
 \bm{\rmN}_{D}^{\mathrm{Choi}}(\Pcal_{\pi/4})=2,
\end{equation}
as seen from the solid orange curve of Fig.\subref{fig:classical-ZZ}. Equation~\eqref{eq:application-choi-lower} and the classical $d_E=4$ upper bound then imply
\begin{equation}
 \bm{\rmN}_{D}^{\mathrm{Choi}}(\Pcal_{\pi/4})
 =\bm{\rmN}_{D}(\Pcal_{\pi/4})=2.
 \label{eq:two-ZZ-saturation}
\end{equation}

Likewise, with $\rho_E=\ketbra{00}{00}_{E_AE_B}$, Eq.~\eqref{eq:two-XY} generates two independent $i$SWAPs at $\theta=\pi/2$.  Each environmental qubit transports one qubit of quantum memory between the two temporal blocks, giving
\begin{equation}
 \bm{\rmN}_{D}^{\mathrm{Choi}}(\Pcal_{\pi/2})=4, 
\end{equation}
as shown by the solid orange curve in Fig.~\subref{fig:quantum-XY}. Together with the quantum $d_E=4$ bound,
\begin{equation}
 \bm{\rmN}_{D}^{\mathrm{Choi}}(\Pcal_{\pi/2})
 =\bm{\rmN}_{D}(\Pcal_{\pi/2})=4.
 \label{eq:two-XY-saturation}
\end{equation}
This value also reaches the universal system ceiling in Eq.~\eqref{eq:application-system-ceiling}.  Consequently, the Choi lower bound alone certifies the equality in Eq.~\eqref{eq:two-XY-saturation}, independently of any assumption about the environmental realization.  The known $d_E=4$ realization additionally shows that the environment bound is tight.  A single environmental qubit restricts the temporal correlations to at most two bits, whereas a two-qubit environment can reach the four-bit system ceiling. An alternative single-qubit-memory construction attaining the two-bit ceiling is given in App.~\ref{app:application-tightness}.

More generally, an observed value $W_D=\bm{\rmN}_{D}^{\mathrm{Choi}}(\Pcal)$ satisfies
\begin{equation}
 W_D\le\bm{\rmN}_{D}(\Pcal)
 \le\min\{2\log d_E,2\log d_S\}.
 \label{eq:application-combined-witness-bound}
\end{equation}
Hence $W_D>2$ rules out every single-qubit quantum-memory realization, while $W_D>1$ rules out every classical single-qubit-memory realization.  More generally, an arbitrary quantum-memory realization requires
\begin{equation}
 d_E\ge\left\lceil2^{W_D/2}\right\rceil,
 \label{eq:application-quantum-dimension-witness}
\end{equation}
whereas a classical-memory realization requires
\begin{equation}
 d_E\ge\left\lceil2^{W_D}\right\rceil.
 \label{eq:application-classical-dimension-witness}
\end{equation}
Whenever the Choi lower bound reaches one of the dimension-dependent upper bounds, as in Eqs.~\eqref{eq:two-ZZ-saturation} and \eqref{eq:two-XY-saturation}, it determines the operational quantifier exactly. In particular, $W_D=4$ reaches the universal $d_S=4$ system ceiling and hence certifies the exact value without any prior knowledge of $d_E$.  Conversely, a certified value $W_D>4$ cannot be explained by increasing the environment dimension.  It would instead signal a failure of the assumed two-step, two-qubit process model, for example because of leakage outside the computational subspace, an incorrect temporal coarse graining, or errors in the reconstruction of $W_D$.

\begin{figure*}[t]
 \centering
 \includegraphics[width=0.9\textwidth]{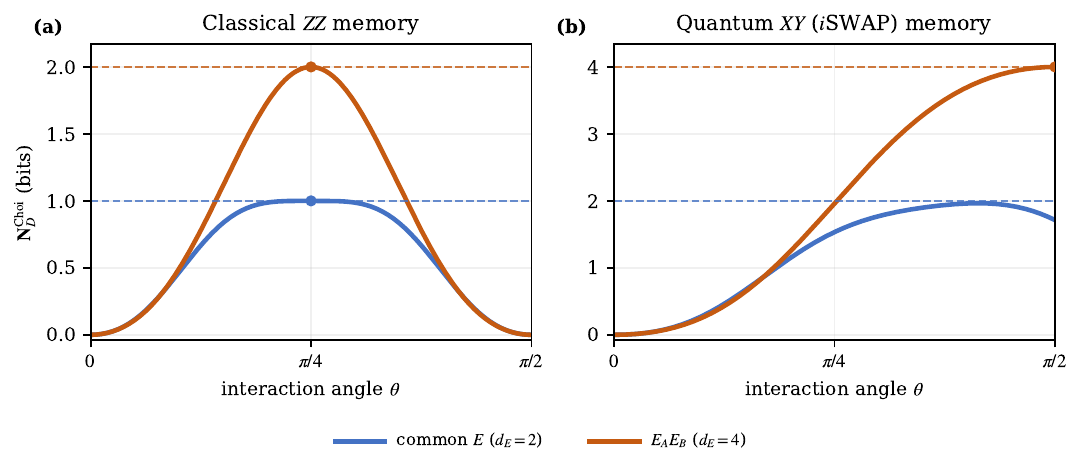}
 \subfloat[]{\label{fig:classical-ZZ}}
 \subfloat[]{\label{fig:quantum-XY}}
 \caption{Memory-dimension signatures for a two-step process on two system qubits.  (a) Relative-entropy Choi temporal correlations for classical $ZZ$ memory.  The common $d_E=2$ memory reaches its one-bit ceiling, while the two independent memories reach the $d_E=4$ ceiling of two bits.  (b) Corresponding results for quantum $XY$ exchange.  The common memory crosses the classical ceiling and approaches the two-bit quantum ceiling.  Two independent $i$SWAPs reach four bits at $\theta=\pi/2$, saturating both the $d_E=4$ environment bound and the universal $d_S=4$ system ceiling.  Dashed lines show the classical- and quantum-memory bounds; at four bits, the $d_E=4$ quantum-memory bound coincides with the system ceiling.}
 \label{fig:memory-dimension}
\end{figure*}

\subsection{Interpolation between spatial, temporal, and total correlations}
\label{subsec:application-interpolation}

The common-environment models generally generate both spatial and temporal correlations.  They therefore also provide a direct illustration of the relations between the three Choi correlation functionals.  Write
\begin{align}
 N_D(\theta)&=\bm{\rmN}_{D}^{\mathrm{Choi}}(\Pcal_\theta),&
 R_D(\theta)&=\bm{\rmR}_{D}^{\mathrm{Choi}}(\Pcal_\theta),\nonumber\\
 T_D(\theta)&=\bm{\rmT}_{D}^{\mathrm{Choi}}(\Pcal_\theta).&&
 \label{eq:application-NRT-abbreviations}
\end{align}
For the present case, $n=m=2$, the relative-entropy Choi interpolation bounds give
\begin{equation}
 L_D(\theta)\le T_D(\theta)\le U_D(\theta),
 \label{eq:application-interpolation}
\end{equation}
where
\begin{align}
 L_D(\theta)&=\max\{N_D(\theta),R_D(\theta)\},
 \label{eq:application-lower-interpolation}\\
 U_D(\theta)&=\min\big\{
 N_D(\theta)+2R_D(\theta),\nonumber\\
 &\hspace{31mm}R_D(\theta)+2N_D(\theta),8\big\}.
 \label{eq:application-upper-interpolation}
\end{align}
The last entry in Eq.~\eqref{eq:application-upper-interpolation} is the universal total-correlation system ceiling $2n\log d_S=8$.

Figure~\ref{fig:application-interpolation} displays the resulting interval for the common-$E$ $ZZ$ and $XY$ models.  The total Choi correlations remain between the lower and upper bounds for every interaction angle.  The bounds need not be saturated: their role is to constrain the total departure from fully uncorrelated noise using its separately quantified temporal and spatial components.

For the two-environment models of Sec.~\ref{subsec:two-environments}, the process factorizes spatially as $\Pcal_\theta^{AB}=\Pcal_\theta^A\otimes\Pcal_\theta^B$.  Consequently, $R_D(\theta)=0$, and Eq.~\eqref{eq:application-interpolation} collapses to the exact identity
\begin{equation}
 \bm{\rmT}_{D}^{\mathrm{Choi}}(\Pcal_\theta)
 =\bm{\rmN}_{D}^{\mathrm{Choi}}(\Pcal_\theta).
 \label{eq:application-factorized-identity}
\end{equation}
No separate interpolation plot is therefore needed for those models.

\begin{figure*}[t]
 \centering
 \includegraphics[width=0.94\textwidth]{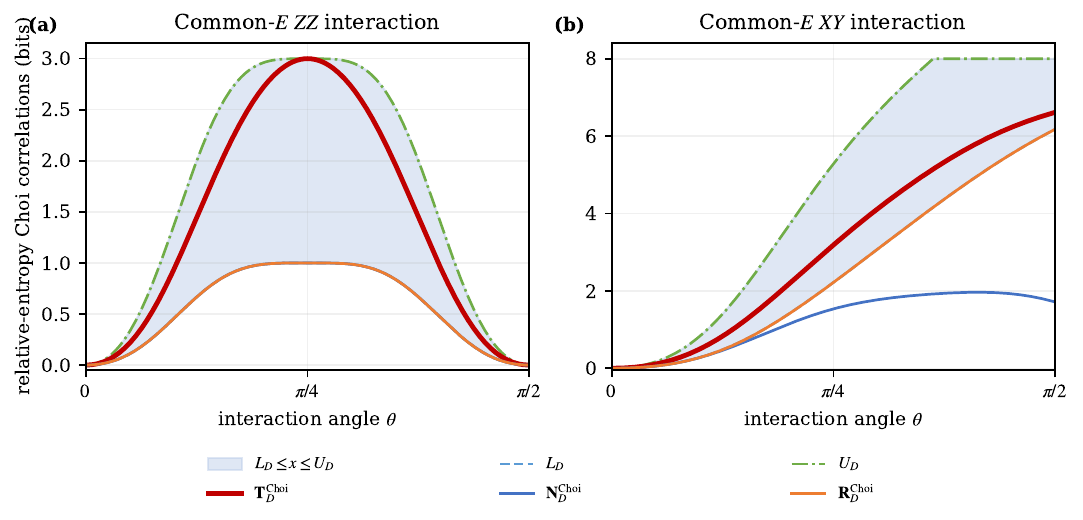}
 \caption{Relative-entropy Choi interpolation bounds for the two-step, two-qubit common-environment models.  The thick red curve shows the total Choi correlations $T_D$, while the shaded interval is bounded below by $L_D=\max\{N_D,R_D\}$ and above by $U_D=\min\{N_D+2R_D,R_D+2N_D,8\}$.  The temporal and spatial Choi correlations are also shown separately.  Panels (a) and (b) correspond to the common-$E$ $ZZ$ and $XY$ interactions, respectively. In panel (a) we have $N_D=R_D=L_D$. The total correlations lie inside the predicted interval for every interaction angle.}
 \label{fig:application-interpolation}
\end{figure*}

\section{Conclusion}\label{sec:conclusion}

We have formulated temporal, spatial, and total spatiotemporal correlations in general quantum processes through a common operational principle: optimal distinguishability from the corresponding sets of uncorrelated processes. The resulting quantities are resource-theoretically motivated and satisfy data processing under all free-set-preserving superprocesses. This operational layer complements existing methods for representing, reconstructing, and modeling spatiotemporally correlated noise by assigning the correlations a physically consistent operational magnitude.

The analytical bounds connect this magnitude to the memory responsible for temporal information flow. In particular, the distinction between classical and quantum memory yields different dimension-dependent ceilings, while the system dimension imposes a universal upper limit. When combined with any certified lower estimate of the operational correlation, these ceilings become witnesses of minimum memory dimension, bounded-memory nonclassicality, or inadequacy of the assumed process model. Importantly, the certification logic does not require the restricted estimate itself to attain the fully optimized distinguishability. The effective \(ZZ\)- and \(XY\)-interaction models illustrate how these results operate in regimes motivated by superconducting hardware. They provide examples that saturate the classical-memory, quantum-memory, and system-dimension bounds and show how an accessible Choi-state calculation can be converted into a statement about the memory compatible with the observed process.

Together, the results establish a route from multitime process data to operational correlation quantification and physically interpretable memory constraints. An important next step is to combine these tools with scalable process-learning and restricted-probing methods, allowing the relevant bounds to be evaluated without reconstructing exponentially large process tensors. Such developments could support correlation-aware characterization, control, and noise-management strategies for increasingly complex quantum devices.

\begin{acknowledgments}
We thank Marco Túlio Quintino, Ernesto Galvão, and Frederico Brito for useful discussions. OpenAI's ChatGPT (GPT-5.6 Sol) was used to assist with drafting and revising portions of the manuscript, exploring and checking intermediate mathematical arguments, and generating code for the numerical examples. The authors directed its use through task-specific prompts and independently verified all mathematical statements, plots, and manuscript text. All scientific decisions and conclusions were made by the authors, who take full responsibility for the content of this work. This work was supported by the São Paulo Research Foundation (FAPESP) under Grant No.~2022/00993-9 and Grant No.~2023/04625-7 and by the National Council for Scientific and Technological Development (CNPq) under Grant No.~305766/2026-0.
\end{acknowledgments}

\section*{Data Availability}

All plots presented in this work were generated directly from analytical expressions specified in the text. The code used to evaluate these expressions and generate the figures is available from the corresponding author upon reasonable request.

\appendix

\section{Relative-entropy counterexample to Choi monotonicity}
\label{app:counterexample}

We adapt the temporal counterexample of Ref.~\cite{zambon2024process} to a bipartite spatial channel. Let $\bar A,\bar C$ denote two qubit inputs and $B,D$ the outputs. We write \(\Id_X=\mathbb I_X/2\) for the maximally mixed qubit state and \(\Phi_{AD}=\ketbra{\Phi}{\Phi}_{AD}\), where
\(\ket{\Phi}_{AD}=(\ket{00}_{AD}+\ket{11}_{AD})/\sqrt{2}\). Define the CPTP channel
\begin{align}
 \Ecal_{\bar A\bar C\to BD}(X)
 =\ketbra{0}{0}_B\otimes\big[
 &\Ical_{\bar A\to D}
   (\bra0_{\bar C}X\ket0_{\bar C})\nonumber\\
 &+\tr_{\bar A}(\bra1_{\bar C}X\ket1_{\bar C})\Id_D
 \big].
 \label{eq:counterexample-channel}
\end{align}
Its normalized Choi state, with $A,C$ the input references, is
\begin{equation}
 \Upsilon^\Ecal
 =\ketbra{0}{0}_B\otimes\frac12\left[
 \ketbra{0}{0}_C\otimes\Phi_{AD}
 +\ketbra{1}{1}_C\otimes\Id_A\otimes\Id_D
 \right].
 \label{eq:counterexample-choi}
\end{equation}
We evaluate spatial total correlation across $AB:CD$. The relevant von Neumann entropies are
\begin{align}
 H(AB)&=1,&H(CD)&=2,&H(ABCD)&=2,
 \end{align}
and therefore
\begin{align}
    \CD^{AB:CD}(\Ecal)&=I(AB:CD)_{\Upsilon^{\Ecal}}\\
    &= H(AB)+H(CD)-H(ABCD)\\
    &=1.
\end{align}

Now act locally on the second party by precomposing the $\bar C$ input with the replacer
\begin{equation}
 \Rcal_0(Y)=\tr(Y)\ketbra{0}{0}_{\bar C}.
 \label{eq:replacer-zero}
\end{equation}
This is a deterministic local superchannel and maps every product channel to a product channel. The transformed normalized Choi state is
\begin{equation}
 \Upsilon^{\Ecal\circ(\Ical_{\bar A}\otimes\Rcal_0)}
 =\Phi_{AD}\otimes\ketbra{0}{0}_B\otimes\Id_C .
 \label{eq:counterexample-after-choi}
\end{equation}
Now
\begin{align}
 H(AB)&=1,&H(CD)&=2,&H(ABCD)&=1,
 \end{align}
so
\begin{equation}
 \CD^{AB:CD}\!\left[
 \Ecal\circ(\Ical_{\bar A}\otimes\Rcal_0)\right]=2>1.
 \label{eq:counterexample-after}
\end{equation}
Thus relative-entropy Choi correlation increases under a local correlation-nongenerating superchannel.  In the normalization of Ref.~\cite{rivas2015quantifying}, the same increase is $1/4\to1/2$.

\section{Proofs of correlation-quantifier relations}
\label{app:quantifier-proofs}

\subsection{Monotonicity}
\label{app:monotonicity}

\begin{proof}[Proof of Theorem~\ref{thm:monotonicity}]
We start from the definition and then use two inequalities: the first one is a consequence of $\bm{Z}(\Xbb)\subseteq\Xbb$ and the second one is the monotonicity of generalized divergences under superprocesses (Theorem 1 of Ref.~\cite{zambon2024process}), yielding
\begin{align}
    \bm{\rmG}_{S}^{\Xbb}\!\left[\bm{Z}(\Pcal)\right]&= \inf_{\Qcal\in\Xbb}G_{S}\!\left[\bm{Z}(\Pcal),\Qcal\right]\\
    &\le \inf_{\Qcal\in\bm{Z}(\Xbb)}G_{S}\!\left[\bm{Z}(\Pcal),\Qcal\right]\\
    &= \inf_{\Qcal\in\Xbb}G_{S}\!\left[\bm{Z}(\Pcal),\bm{Z}(\Qcal)\right]\\
    &\le \inf_{\Qcal\in\Xbb} G_{S}(\Pcal,\Qcal)\\
    &= \bm{\rmG}_{S}^{\Xbb}(\Pcal),
\end{align}
as stated in Thm.~\ref{thm:monotonicity}.
\end{proof}

\subsection{Correlation hierarchy}\label{app:quantifier-relations}

\begin{proof}[Proof of Proposition~\ref{prop:hierarchy}]
Write the operational process divergence as
\begin{equation}
 G_S(\Pcal,\Qcal)
 =\sup_{\Scal\in\Sbb}
 S\!\left[\Pcal(\Scal),\Qcal(\Scal)\right].
 \label{eq:process-divergence-app}
\end{equation}
By definition,
\begin{align}
 \NS&=\GS^{\Mbb}=\inf_{\Qcal\in\Mbb}G_S(\Pcal,\Qcal),\\
 \RS&=\GS^{\Lbb}=\inf_{\Qcal\in\Lbb}G_S(\Pcal,\Qcal),\\
 \TS&=\GS^{\Ubb}=\inf_{\Qcal\in\Ubb}G_S(\Pcal,\Qcal).
 \label{eq:three-operational-defs-app}
\end{align}
Since $\Ubb=\Mbb\cap\Lbb$ is a subset of each larger free set, minimization over $\Ubb$ cannot decrease either of the first two minima. This proves Eq.~\eqref{eq:operational-lower-hierarchy}.  The same argument applies to the corresponding Choi free sets.  Restricting the maximum in Eq.~\eqref{eq:process-divergence-app} to any subclass of probes cannot increase it; minimizing afterwards proves Eq.~\eqref{eq:choi-operational-order} and its restricted-probe generalization.
\end{proof}

\subsection{Interpolation bounds}

\subsubsection{Trace-distance interpolation}\label{app:trace-interpolation}

\begin{proof}[Proof of Proposition~\ref{prop:trace-interpolation}]
For a process $\Pcal\in\Pbb_n^m$ let $\Mcal\in\Mbb_n^m$ be the closest Markovian process and $\Lcal\in\Lbb_n^m$ be the closest local process, with
\begin{equation}
 \Mcal=\bigotimes_{j=1}^{n}\Mcal_j,
 \qquad
 \Lcal=\bigotimes_{k=1}^{m}\Lcal_k,
 \label{eq:epsilon-free-approximants}
\end{equation}
such that
\begin{equation}
 G_T(\Pcal,\Mcal)=\bm{\rmN}_{T}(\Pcal),
 \qquad
 G_T(\Pcal,\Lcal)=\bm{\rmR}_{T}(\Pcal).
 \label{eq:epsilon-distances}
\end{equation}
Here $\Mcal_j$ denotes a complete, possibly spatially correlated, one-step block, whereas $\Lcal_k$ may contain arbitrary temporal memory on subsystem $k$. These minimizers exist because the free sets are compact in finite dimensions and the process trace distance is continuous.

Let $\mathfrak R_j$ be a deterministic causal reduction that retains only temporal block $j$ and closes all other slots with fixed deterministic local operations.  Then
\begin{equation}
 \mathfrak R_j(\Mcal)=\Mcal_j,
 \qquad
 \widetilde{\Lcal}_j
 \Def\mathfrak R_j(\Lcal)
 =\bigotimes_{k=1}^{m}\Lcal_{jk}
 \label{eq:time-reductions}
\end{equation}
is spatially local.  Contractivity and the triangle inequality give
\begin{align}
 G_T(\Mcal_j,\widetilde{\Lcal}_j)
 &\le G_T(\Mcal,\Lcal)\\
 &\le G_T(\Mcal,\Pcal)
      +G_T(\Pcal,\Lcal)\\
 &\le\bm{\rmN}_{T}(\Pcal)+\bm{\rmR}_{T}(\Pcal).
 \label{eq:block-distance-bound}
\end{align}

The process trace distance is tensor-subadditive.  To see this, first note that tensoring both arguments with the same deterministic process cannot increase their distance: every joint probing strategy induces a valid strategy on the unfixed factor after the deterministic factor is inserted. Applying this stability to the telescoping identity for two tensor products and then the triangle inequality gives
\begin{equation}
 G_T\!\left(\bigotimes_j\Pcal_j,\bigotimes_j\Qcal_j\right)
 \le\sum_jG_T(\Pcal_j,\Qcal_j).
 \label{eq:process-tensor-subadditivity}
\end{equation}

Now define
\begin{equation}
 \Ucal
 =\bigotimes_{j=1}^{n}\widetilde{\Lcal}_j\in\Ubb.
 \label{eq:constructed-L}
\end{equation}
Equations~\eqref{eq:block-distance-bound} and \eqref{eq:process-tensor-subadditivity} imply
\begin{equation}
 G_T(\Mcal,\Ucal)
 \le n[\bm{\rmN}_{T}(\Pcal)+\bm{\rmR}_{T}(\Pcal)].
 \label{eq:M-to-L-bound}
\end{equation}
Therefore
\begin{align}
 \bm{\rmT}_{T}(\Pcal)
 &\le G_T(\Pcal,\Ucal)\\
 &\le\bm{\rmN}_{T}(\Pcal)+n\qty[\bm{\rmN}_{T}(\Pcal)+\bm{\rmR}_{T}(\Pcal)].
\end{align}
This proves Eq.~\eqref{eq:trace-interpolation-n}. Reversing the roles of the temporal and spatial factorizations gives Eq.~\eqref{eq:trace-interpolation-m}.
\end{proof}

\subsubsection{Relative-entropy Choi interpolation}\label{app:choi-interpolation}

\begin{proof}[Proof of Proposition~\ref{prop:choi-interpolation}]
Let $A_{jk}$ denote the normalized-Choi Hilbert-space factor associated with time $j$ and subsystem $k$, and group the factors as
\begin{equation}
 A_{j:}=\bigotimes_{k=1}^{m}A_{jk},\qquad
 A_{:k}=\bigotimes_{j=1}^{n}A_{jk}.
\end{equation}
For the product Choi partitions used in the definitions, the closest product state in relative entropy is the product of the corresponding marginals. Hence
\begin{align}
 \bm{\rmT}_{D}^{\mathrm{Choi}}  &=\sum_{j,k}H(\Upsilon_{A_{jk}})-H(\Upsilon),
 \label{eq:T-choi-entropy}\\
 \bm{\rmN}_{D}^{\mathrm{Choi}}  &=\sum_jH(\Upsilon_{A_{j:}})-H(\Upsilon),
 \label{eq:N-choi-entropy}\\
 \bm{\rmR}_{D}^{\mathrm{Choi}}  &=\sum_kH(\Upsilon_{A_{:k}})-H(\Upsilon).
 \label{eq:R-choi-entropy}
\end{align}
It follows exactly that
\begin{align}
 \bm{\rmT}_{D}^{\mathrm{Choi}}-\bm{\rmN}_{D}^{\mathrm{Choi}}  &=\sum_{j=1}^{n}
 \left[\sum_kH(\Upsilon_{A_{jk}})-H(\Upsilon_{A_{j:}})\right]\\
 &=\sum_{j=1}^{n}
 D\!\left(\Upsilon_{A_{j:}}\middle\|
 \bigotimes_k\Upsilon_{A_{jk}}\right).
 \label{eq:choi-chain-exact}
\end{align}
Furthermore,
\begin{equation}
 \bm{\rmR}_{D}^{\mathrm{Choi}}
 =D\!\left(\Upsilon\middle\|
 \bigotimes_k\Upsilon_{A_{:k}}\right).
\end{equation}
Tracing out every temporal block except $j$ maps the second argument to $\bigotimes_k\Upsilon_{A_{jk}}$.  Data processing therefore gives
\begin{equation}
 D\!\left(\Upsilon_{A_{j:}}\middle\|
 \bigotimes_k\Upsilon_{A_{jk}}\right)
 \le \bm{\rmR}_{D}^{\mathrm{Choi}}.
 \label{eq:choi-DPI-step}
\end{equation}
Summing Eq.~\eqref{eq:choi-DPI-step} over $j$ proves Eq.~\eqref{eq:choi-interpolation-n}.  Exchanging $j$ and $k$ proves Eq.~\eqref{eq:choi-interpolation-m}.

Moreover, the above bounds are tight within the set of valid combs. Consider a causal replacer process that ignores every input, samples one unbiased classical bit at the first step, stores it, and emits the same bit on every subsystem at every step. Its normalized Choi state is
\begin{equation}
 \Upsilon_{m,n}
 =\left(
 \frac12\ketbra{0}{0}_{O}^{\otimes mn}
 +\frac12\ketbra{1}{1}_{O}^{\otimes mn}
 \right)
 \otimes\bigotimes_{j,k}\Id_{i_{jk}}.
 \label{eq:perfect-classical-grid}
\end{equation}
The maximally mixed input factors are independent and cancel from every total-correlation expression. Every nonempty output marginal has entropy one bit, so
\begin{equation}
 \bm{\rmT}_{D}^{\mathrm{Choi}}=mn-1,\qquad
 \bm{\rmN}_{D}^{\mathrm{Choi}}=n-1,\qquad
 \bm{\rmR}_{D}^{\mathrm{Choi}}=m-1.
\end{equation}
Consequently,
\begin{equation}
 \bm{\rmT}_{D}^{\mathrm{Choi}}
 =\bm{\rmN}_{D}^{\mathrm{Choi}}+n\bm{\rmR}_{D}^{\mathrm{Choi}}
 =\bm{\rmR}_{D}^{\mathrm{Choi}}+m\bm{\rmN}_{D}^{\mathrm{Choi}},
\end{equation}
thus saturating the bounds.
\end{proof}

\section{Domination tools}

\subsection{State domination}\label{app:domination}

\begin{proof}[Proof of Lemma~\ref{lem:state-domination}]
Following the standard binary preparation construction \cite{wang2019resource,gour2020optimal}, for $0\le\lambda<1$ define
\begin{equation}
 \tau=\frac{\sigma-\lambda\rho}{1-\lambda}.
\end{equation}
Since $\sigma\ge\lambda\rho$ and $1-\lambda>0$, $\tau\ge0$. Moreover, $\tr[\tau]=1$, implying $\tau$ is a state and
\begin{equation}
 \sigma=\lambda\rho+(1-\lambda)\tau.
\end{equation}
Define the preparation channel
\begin{equation}
 \Lambda(X)
 =\langle0|X|0\rangle\rho+\langle1|X|1\rangle\tau,
\end{equation}
and the binary states
\begin{equation}
 p=\ketbra{0}{0},\qquad
 q_\lambda=\lambda\ketbra{0}{0}+(1-\lambda)\ketbra{1}{1}.
\end{equation}
Since $\Lambda(p)=\rho$ and $\Lambda(q_\lambda)=\sigma$, data processing gives
\begin{equation}
 S(\rho,\sigma)
 =S\!\left[\Lambda(p),\Lambda(q_\lambda)\right]
 \le S(p,q_\lambda)
 =b_S(\lambda),
\end{equation}
which proves the lemma for $0\le\lambda<1$. If $\lambda=1$, normalized-state domination implies $\rho=\sigma$; applying data processing to any preparation channel that maps $p$ to $\rho$ gives $S(\rho,\rho)\le S(p,p)=b_S(1)$.

It remains to verify the stated monotonicity. For $0\le\lambda'\le\lambda<1$, let a binary classical channel fix $\ketbra{0}{0}$ and map $\ketbra{1}{1}$ to $r\ketbra{0}{0}+(1-r)\ketbra{1}{1}$, where
\begin{equation}
 r=\frac{\lambda-\lambda'}{1-\lambda'}.
\end{equation}
This channel maps $q_{\lambda'}$ to $q_\lambda$ while fixing $p$. Data processing therefore gives $b_S(\lambda)\le b_S(\lambda')$. The case $\lambda=1$ follows by using the constant channel with output $p$, so $b_S$ is nonincreasing on $[0,1]$.

For completeness, the two calibrations used in the main text follow directly. For relative entropy and $\lambda>0$,
\begin{align}
 b_D(\lambda)
 &=D(p\Vert q_\lambda)\nonumber\\
 &=-\bra{0}\log q_\lambda\ket{0}\nonumber\\
 &=-\log\lambda.
 \label{eq:bD-derivation}
\end{align}
At $\lambda=0$, the support of $p$ is not contained in that of $q_0$, so $b_D(0)=+\infty$. For trace distance,
\begin{align}
 b_T(\lambda)
 &=\frac12\norm{p-q_\lambda}_1\nonumber\\
 &=\frac{1-\lambda}{2}
 \norm{\ketbra{0}{0}-\ketbra{1}{1}}_1\nonumber\\
 &=1-\lambda.
 \label{eq:bT-derivation}
\end{align}
\end{proof}

\subsection{Free-candidate domination}
\label{app:free-candidate-domination}

The following process-level consequence of Lemma~\ref{lem:state-domination} is the form used in the dimension-bound proofs.

\begin{corollary}[Free-candidate domination]
\label{cor:free-candidate-domination}
Let $\Pcal,\Qcal\in\Pbb_n$ have normalized Choi operators satisfying
\begin{equation}
 \lambda\Upsilon^{\Pcal}\le\Upsilon^{\Qcal},
 \qquad \lambda\in[0,1].
 \label{eq:free-candidate-choi-domination}
\end{equation}
Then
\begin{equation}
 G_S(\Pcal,\Qcal)\le b_S(\lambda).
 \label{eq:free-candidate-process-bound}
\end{equation}
Consequently, if $\Qcal\in\Xbb\subseteq\Pbb_n$, then
\begin{equation}
 \bm{\rmG}_{S}^{\Xbb}(\Pcal)\le b_S(\lambda).
 \label{eq:free-candidate-correlation-bound}
\end{equation}
\end{corollary}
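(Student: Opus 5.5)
The plan is to reduce the process-level statement to Lemma~\ref{lem:state-domination}, applied probe by probe. We must show that, for every compatible probing comb $\Scal\in\Sbb$, the outputs satisfy
\[
\lambda\,\Pcal(\Scal)\le\Qcal(\Scal).
\]
Once this holds, the lemma gives $S[\Pcal(\Scal),\Qcal(\Scal)]\le b_S(\lambda)$ for each $\Scal$. Taking the supremum over $\Scal$ yields $G_S(\Pcal,\Qcal)\le b_S(\lambda)$. The second claim then follows because $\bm{\rmG}_S^{\Xbb}(\Pcal)=\inf_{\Qcal'\in\Xbb}G_S(\Pcal,\Qcal')\le G_S(\Pcal,\Qcal)$ whenever $\Qcal\in\Xbb$.

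The key step is propagating the Choi inequality through the link product. Write the contraction as
\[
\Pcal(\Scal)=\tr_{\mathrm{legs}}\!\left[(\Upsilon^{\Pcal})^{T}\,\Upsilon^{\Scal}\right],
\]
with suitable normalization factors coming from the normalized Choi convention (powers of the input dimensions, identical for $\Pcal$ and $\Qcal$ since they share legs). Here $\Upsilon^{\Scal}$ is the positive Choi operator of the probing comb, including its ancilla output $A$.

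The map $X\mapsto\tr_{\mathrm{legs}}[X^{T}\Upsilon^{\Scal}]$ is linear and completely positive. This is because it can be written as $\tr_{\mathrm{legs}}[(\mathbb I\otimes\sqrt{\Upsilon^{\Scal}})(X^{T}\otimes\mathbb I)(\mathbb I\otimes\sqrt{\Upsilon^{\Scal}})]$ up to partial transposes; equivalently, the link product of two positive operators is positive. Applying it to the positive operator $\Upsilon^{\Qcal}-\lambda\Upsilon^{\Pcal}\ge0$ gives $\Qcal(\Scal)-\lambda\Pcal(\Scal)\ge0$.

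Both outputs are normalized states, since $\Pcal,\Qcal$ are combs and $\Scal$ is a deterministic compatible comb. So the hypotheses of Lemma~\ref{lem:state-domination} hold exactly.

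An alternative, more operational route avoids transposes. Implement $\Scal$ as a sequence of channels with an ancilla memory, and view $\Pcal(\Scal)$ as a fixed CP map applied to $\Upsilon^{\Pcal}$ via gate teleportation. That map is CP but only trace-nonincreasing, post-selected on Bell outcomes with a success probability common to both processes. Rescaling by that common probability preserves the operator inequality. Either version works; I would use the link-product positivity, since it is standard~\cite{chiribella2009theoretical}.

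The main obstacle is bookkeeping: the normalized-Choi conventions and the partial transpose must be handled so that the same positive scalar multiplies both sides. Then $\lambda$ is preserved rather than rescaled. I would check this by noting that the contraction formula is identical for $\Pcal$ and $\Qcal$, so any normalization constant cancels in the inequality. The edge case $\lambda=0$ is trivial, since $b_S(0)$ is the maximum of $b_S$ and the lemma still applies.
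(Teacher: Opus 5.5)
Your proposal is correct and is essentially the paper's proof. There too, the positive operator $\Upsilon^{\Qcal}-\lambda\Upsilon^{\Pcal}$ is contracted with an arbitrary deterministic probe to give $\lambda\Pcal(\Scal)\le\Qcal(\Scal)$; Lemma~\ref{lem:state-domination} is applied probe by probe, the supremum over $\Scal$ is taken, and $\Qcal$ is used as a candidate in the infimum.

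The one slip is your square-root rewriting. It is not equal to $\tr_{L}[X^{T_L}\Upsilon^{\Scal}]$ in general, and it would not give positivity anyway, since $X^{T_L}$ need not be positive. The clean justification of the link-product positivity you cite is $\tr_{L}[X^{T_L}Y]=\langle\Omega|(X\otimes Y)|\Omega\rangle$, with $|\Omega\rangle$ the unnormalized maximally entangled vector on two copies of the contracted legs $L$; this is manifestly completely positive in $X$ when $Y\ge0$.
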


\begin{proof}
For any deterministic compatible probing comb $\Scal$, contracting Eq.~\eqref{eq:free-candidate-choi-domination} with $\Upsilon^{\Scal}$ preserves the CP ordering. Indeed, $\Upsilon^{\Qcal}-\lambda\Upsilon^{\Pcal}$ is the Choi operator of a completely positive comb, and its contraction with a compatible completely positive comb is positive. Hence
\begin{equation}
 \lambda\Pcal(\Scal)\le\Qcal(\Scal).
 \label{eq:free-candidate-output-domination}
\end{equation}
Both outputs are normalized states, so Lemma~\ref{lem:state-domination} gives
\begin{equation}
 S\!\left[\Pcal(\Scal),\Qcal(\Scal)\right]\le b_S(\lambda)
 \qquad\forall\Scal\in\Sbb_n.
\end{equation}
Maximizing over $\Scal$ proves Eq.~\eqref{eq:free-candidate-process-bound}. If $\Qcal\in\Xbb$, using it as a candidate in the minimization defining $\bm{\rmG}_{S}^{\Xbb}$ proves Eq.~\eqref{eq:free-candidate-correlation-bound}.
\end{proof}

\subsection{Marginal domination}

\begin{lemma}[Marginal domination]
\label{lem:marginal-domination}
Let $X_{AB}\ge0$, with $X_B=\tr_A X_{AB}$, and set $d_A=\dim\mathcal H_A$. Then
\begin{equation}
 X_{AB}\le d_A\,\mathbb I_A\otimes X_B.
 \label{eq:marginal-domination}
\end{equation}
\end{lemma}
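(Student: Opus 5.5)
The plan is to prove the inequality $X_{AB}\le d_A\,\mathbb I_A\otimes X_B$ by writing the partial trace as a twirl over a unitary operator basis on $A$. The identity driving the argument is
\begin{equation}
 \frac{1}{d_A^2}\sum_{k=1}^{d_A^2}(U_k\otimes\mathbb I_B)\,X_{AB}\,(U_k\otimes\mathbb I_B)^\dagger
 =\frac{\mathbb I_A}{d_A}\otimes X_B,
 \label{eq:twirl-plan}
\end{equation}
where $\{U_k\}$ is an orthogonal unitary basis of operators on $\mathcal H_A$. One concrete choice is the Weyl--Heisenberg (clock-and-shift) operators, which satisfy $\tr(U_k^\dagger U_l)=d_A\delta_{kl}$. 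Equation~\eqref{eq:twirl-plan} is the standard statement that averaging over a unitary 1-design on $A$ yields the completely depolarizing channel $Y\mapsto\tr(Y)\,\mathbb I_A/d_A$, tensored with the identity on $B$.

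The proof then proceeds in three steps.
\begin{itemize}
\item Verify Eq.~\eqref{eq:twirl-plan}. By linearity it suffices to check it on product operators $Y_A\otimes Z_B$. There it reduces to the completeness relation $\sum_k U_k Y U_k^\dagger=d_A\tr(Y)\,\mathbb I_A$, which follows from the orthogonality of the basis. This is the only step that needs an actual calculation, and it is the one I expect to be the main obstacle. If I prefer to avoid the calculation, I can instead cite the standard 1-design or depolarization property of the Weyl operators.
\item Drop terms. Every summand $(U_k\otimes\mathbb I)X_{AB}(U_k\otimes\mathbb I)^\dagger$ is positive semidefinite, because $X_{AB}\ge0$ and conjugation preserves positivity. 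Choose the basis so that $U_1=\mathbb I_A$, which the Weyl basis already does. Discarding all terms with $k\ge2$ from the left-hand side of Eq.~\eqref{eq:twirl-plan} then gives
\begin{equation*}
 \frac{1}{d_A^2}\,X_{AB}\le\frac{\mathbb I_A}{d_A}\otimes X_B .
\end{equation*}
\item Multiply both sides by $d_A^2$. This yields exactly Eq.~\eqref{eq:marginal-domination}.
\end{itemize}

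As a fallback, there is an alternative route through the unnormalized maximally entangled vector $\ket{\Gamma}$ on two copies of $A$. For a pure $X_{AB}=\ketbra{\psi}{\psi}$, the transpose trick lets me write $\ket\psi=(\mathbb I_A\otimes M)\ket{\Gamma}$ for a suitable linear map $M$ acting from the second copy of $A$ into $B$. The inequality then follows from $\ketbra{\Gamma}{\Gamma}\le d_A\mathbb I$, since $\ketbra{\Gamma}{\Gamma}$ has norm $d_A$. The general case follows by convexity: both sides of Eq.~\eqref{eq:marginal-domination} are linear in $X_{AB}$, so the inequality extends to any mixture of pure states.
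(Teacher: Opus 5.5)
Your proof is correct, and both routes work. Your fallback is essentially the paper's argument. The paper introduces the map $\Gamma_A(Z)=d_A\tr(Z)\mathbb I_A-Z$ and notes that its Choi operator $d_A\mathbb I_{AA'}-\ket{\Omega}\!\bra{\Omega}$ is positive, because $\ket{\Omega}\!\bra{\Omega}$ has largest eigenvalue $d_A$. It then applies $\Gamma_A\otimes\Ical_B$ to $X_{AB}$. That is the same key fact $\ket{\Omega}\!\bra{\Omega}\le d_A\mathbb I$ that you use vector by vector before extending by linearity.

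Your main twirl argument certifies complete positivity of that same map differently. Subtracting the $k=1$ term from the depolarization identity gives $\Gamma_A(Z)=\sum_{k\ge2}U_kZU_k^\dagger$. So you exhibit an explicit Kraus decomposition instead of checking Choi positivity.

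The paper's route is shorter and needs no choice of basis. Yours needs a unitary operator basis and the completeness relation; Weyl--Heisenberg operators exist in every finite dimension, and $U_1=\mathbb I_A$ is not even essential, since you could conjugate back by $U_1^\dagger$. In return, yours shows more structure: $d_A\mathbb I_A\otimes X_B-X_{AB}$ is a sum of $d_A^2-1$ local-unitary conjugates of $X_{AB}$.

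The step you flag as the main obstacle is quickest by the paper's Choi reasoning. The operators $U_k/\sqrt{d_A}$ form an orthonormal basis of the $d_A^2$-dimensional Hilbert--Schmidt space, so $\sum_k|U_k\rangle\!\rangle\langle\!\langle U_k|=d_A\mathbb I_{AA'}$. This is exactly the Choi operator of $Y\mapsto d_A\tr(Y)\mathbb I_A$.
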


\begin{proof}
Consider the linear map
\begin{equation}
 \Gamma_A(Z)=d_A\tr(Z)\mathbb I_A-Z.
\end{equation}
For the unnormalized vector $\ket{\Omega}=\sum_{j=1}^{d_A}\ket{jj}$, its Choi operator is
\begin{equation}
 J(\Gamma_A)
 =d_A\mathbb I_{AA'}-\ket{\Omega}\!\bra{\Omega}\ge0,
\end{equation}
because $\ket{\Omega}\!\bra{\Omega}$ has a single nonzero eigenvalue $d_A$. Thus $\Gamma_A$ is completely positive, and
\begin{equation}
 (\Gamma_A\otimes\Ical_B)(X_{AB})
 =d_A\mathbb I_A\otimes X_B-X_{AB}\ge0,
\end{equation}
which proves Eq.~\eqref{eq:marginal-domination}.
\end{proof}

\section{Dimension bounds and certification}

\subsection{Quantum and classical memory bounds}\label{app:memory-bounds}

\begin{proof}[Proof of Theorems~\ref{thm:memory-bound} and \ref{thm:classical-memory-bound}]
Open the $n-1$ inter-step memory wires in the realization of Eq.~\eqref{eq:memory-channel-realization}, and denote the resulting completely positive comb map by $\Ncal$. For $\kappa\in\{1,2\}$, define
\begin{equation}
 \Acal_E^{(\kappa)}
 =\bigotimes_{j=1}^{n-1}\Acal_{E_j}^{(\kappa)},
 \qquad
 \Acal_{E_j}^{(2)}=\Ical_{E_j},
 \quad
 \Acal_{E_j}^{(1)}=\Zcal_{E_j}.
 \label{eq:kappa-memory-channels}
\end{equation}
The quantum-memory realization is $\Pcal=\Ncal\!\left[\Acal_E^{(2)}\right]$. For a classical-memory realization, Definition~\ref{def:classical-memory} allows the identity channel on every open memory wire to be replaced by complete dephasing without changing the process, so $\Pcal=\Ncal\!\left[\Acal_E^{(1)}\right]$.

Let
\begin{equation}
 \Delta_E=\bigotimes_{j=1}^{n-1}\Delta_{E_j},
 \qquad
 \Delta_{E_j}(X)=\tr(X)\Id_{E_j},
\end{equation}
and define $\Pcal^\Delta=\Ncal(\Delta_E)$. Each depolarizing channel discards the incoming memory and prepares a fixed state, thereby breaking every temporal link; hence $\Pcal^\Delta\in\Mbb_n$.

For a link of dimension $d_j$, the normalized Choi operators are
\begin{align}
 \Upsilon^{\Delta_{E_j}}
 &=\frac{\mathbb I_{E_j^{\rm o}E_j^{\rm i}}}{d_j^2},\\
 \Upsilon^{\Ical_{E_j}}
 &=\ketbra{\Phi_j}{\Phi_j},\\
 \Upsilon^{\Zcal_{E_j}}
 &=\frac{1}{d_j}\sum_{a=1}^{d_j}\ketbra{aa}{aa},
\end{align}
where $\ket{\Phi_j}=d_j^{-1/2}\sum_a\ket{aa}$. Therefore
\begin{equation}
 \Upsilon^{\Acal_{E_j}^{(\kappa)}}
 \le d_j^\kappa\Upsilon^{\Delta_{E_j}},
 \qquad \kappa\in\{1,2\}.
\end{equation}
Tensoring these inequalities gives
\begin{equation}
 \lambda_\kappa\Upsilon^{\Acal_E^{(\kappa)}}
 \le\Upsilon^{\Delta_E},
 \qquad
 \lambda_\kappa=\prod_{j=1}^{n-1}d_j^{-\kappa}.
 \label{eq:kappa-memory-domination}
\end{equation}
Since $\Ncal$ is completely positive, it preserves this order, and thus
\begin{equation}
 \lambda_\kappa\Upsilon^{\Pcal}
 \le\Upsilon^{\Pcal^\Delta}.
 \label{eq:process-memory-domination}
\end{equation}
Corollary~\ref{cor:free-candidate-domination}, with $\Pcal^\Delta\in\Mbb_n$, now yields
\begin{equation}
 \NS(\Pcal)\le b_S(\lambda_\kappa).
\end{equation}
Choosing $\kappa=2$ proves Eq.~\eqref{eq:memory-profile-bound}, whereas $\kappa=1$ proves Eq.~\eqref{eq:classical-profile-bound}. If $d_j\le d_E$ at every cut, then $\lambda_\kappa\ge d_E^{-\kappa(n-1)}$; because $b_S$ is nonincreasing, Eqs.~\eqref{eq:environment-bound} and \eqref{eq:classical-environment-bound} follow. The Choi bounds follow by restricting the probing optimization to the Choi-implementing comb.
\end{proof}

\subsection{Universal system ceilings}
\label{app:system-ceilings}

\begin{proof}[Proof of Theorem~\ref{thm:system-ceilings}]
Let \(\Upsilon^{(k)}\) be the normalized Choi state of the reduced process containing its first \(k\) steps.  Normalized comb causality gives
\begin{equation}
 \tr_{o_k}\Upsilon^{(k)}
 =\Id_{i_k}\otimes\Upsilon^{(k-1)},
 \qquad
 \Id_{i_k}=\frac{\mathbb I_{i_k}}{d_S}.
 \label{eq:normalized-comb-causality}
\end{equation}
Applying Lemma~\ref{lem:marginal-domination} with \(A=o_k\) yields
\begin{align}
 \Upsilon^{(k)}
 &\le d_S\mathbb I_{o_k}\otimes
       \tr_{o_k}\Upsilon^{(k)}\nonumber\\
 &=d_S\mathbb I_{o_k}\otimes\Id_{i_k}
       \otimes\Upsilon^{(k-1)}\nonumber\\
 &=d_S^2\Id_{o_k}\otimes\Id_{i_k}
       \otimes\Upsilon^{(k-1)},
 \label{eq:recursive-system-domination}
\end{align}
where \(\Id_{o_k}=\mathbb I_{o_k}/d_S\).

Iterating Eq.~\eqref{eq:recursive-system-domination} from \(k=n\) down to \(k=2\) gives
\begin{equation}
 d_S^{-2(n-1)}\Upsilon^{\Pcal}
 \le\Upsilon^{\Qcal_{\rm M}},
 \label{eq:Markovian-system-domination}
\end{equation}
with
\begin{equation}
 \Upsilon^{\Qcal_{\rm M}}
 =\Upsilon^{(1)}\otimes
   \bigotimes_{k=2}^{n}(\Id_{o_k}\otimes\Id_{i_k}).
 \label{eq:Markovian-system-candidate}
\end{equation}
This is the normalized Choi state of a Markovian process: the first block is the original reduced one-step process and all later blocks are maximally mixed replacers.  Hence \(\Qcal_{\rm M}\in\Mbb\).

Applying Corollary~\ref{cor:free-candidate-domination} to Eq.~\eqref{eq:Markovian-system-domination}, with $\Qcal_{\rm M}\in\Mbb$, directly yields Eq.~\eqref{eq:N-system-ceiling}.

Applying Eq.~\eqref{eq:recursive-system-domination} also to \(k=1\), with \(\Upsilon^{(0)}=1\), yields
\begin{equation}
 d_S^{-2n}\Upsilon^{\Pcal}
 \le\Upsilon^{\Qcal_{\rm U}},
 \qquad
 \Upsilon^{\Qcal_{\rm U}}
 =\bigotimes_{k=1}^{n}(\Id_{o_k}\otimes\Id_{i_k}).
 \label{eq:uncorrelated-system-domination}
\end{equation}
The process \(\Qcal_{\rm U}\) is a tensor product over time of completely depolarizing channels.  Moreover, if \(\mathcal H_S=\bigotimes_{a=1}^{m}\mathcal H_{X_a}\), then
\begin{equation}
 \Delta_S=\bigotimes_{a=1}^{m}\Delta_{X_a},
\end{equation}
so every temporal block is also spatially local. Consequently, $\Qcal_{\rm U}\in\Ubb=\Mbb\cap\Lbb$. Applying Corollary~\ref{cor:free-candidate-domination} to Eq.~\eqref{eq:uncorrelated-system-domination}, first with $\Xbb=\Lbb$ and then with $\Xbb=\Ubb$, proves Eqs.~\eqref{eq:R-system-ceiling} and \eqref{eq:T-system-ceiling}.

The Choi bounds follow from Eq.~\eqref{eq:choi-operational-order}. Finally, \(\ES^{\Ybb}\le\TS\) follows by restricting the maximization over probes from \(\Sbb\) to \(\Ybb\), proving Eq.~\eqref{eq:E-system-ceiling}. Substitution of \(b_D(\lambda)=-\log\lambda\) and \(b_T(\lambda)=1-\lambda\) proves Eqs.~\eqref{eq:ND-system-ceiling}--\eqref{eq:TT-system-ceiling}.
\end{proof}

\subsection{Certification statements}
\label{app:certification}

\begin{proof}[Proof of Corollary~\ref{cor:certification}]
For arbitrary quantum memory, the finite-memory theorem and Eq.~\eqref{eq:certified-lower-bounds} give
\begin{equation}
 W_S^{\rm N}\le b_S\!\left[d_E^{-2(n-1)}\right].
\end{equation}
By the definition of \(b_S^{\leftarrow}\), this implies
\begin{equation}
 d_E^{-2(n-1)}
 \le b_S^{\leftarrow}(W_S^{\rm N}),
\end{equation}
which proves Eq.~\eqref{eq:quantum-dimension-witness}.  For classical memory, the refined memory bound instead gives
\begin{equation}
 W_S^{\rm N}\le b_S\!\left[d_E^{-(n-1)}\right],
\end{equation}
and the identical inversion proves Eq.~\eqref{eq:classical-dimension-witness}.  Applying the same argument to Eqs.~\eqref{eq:N-system-ceiling}--\eqref{eq:T-system-ceiling} proves Eqs.~\eqref{eq:system-N-witness} and \eqref{eq:system-RT-witness}.

For relative entropy, \(b_D^{\leftarrow}(w)=2^{-w}\); for trace distance, \(b_T^{\leftarrow}(w)=1-w\) for \(0\le w<1\).  This gives Eqs.~\eqref{eq:D-quantum-dimension-witness}-- \eqref{eq:trace-classical-dimension-witness}.  Finally, under the prior information \(d_E\le d_{\max}\), the classical and quantum memory ceilings are, respectively, \(b_S\!\left[d_{\max}^{-(n-1)}\right]\) and \(b_S\!\left[d_{\max}^{-2(n-1)}\right]\).  Any certified lower bound above either ceiling contradicts the corresponding realization, proving Eqs.~\eqref{eq:general-nonclassicality-witness} and \eqref{eq:general-quantum-memory-exclusion}.
\end{proof}

\section{Superconducting-qubit model calculations}

\subsection{Choi construction and numerical evaluation}
\label{app:application-choi}

Let $A_jB_j$ be the two-qubit system input in temporal block $j$, and let $R_j=R_j^AR_j^B$ be an isomorphic reference. Define the maximally entangled Choi input for that block as $\ket{\Phi_j}=\ket\Phi_{R_j^AA_j}\ket\Phi_{R_j^BB_j}$. For a pure initial environment state $\ket\eta_E$, the normalized Choi state is obtained from
\begin{equation}
 \ket{\Psi_\theta}
 =U_{A_2B_2E}(\theta)U_{A_1B_1E}(\theta)
 \ket{\Phi_1}\ket{\Phi_2}\ket\eta_E,
 \label{eq:application-global-state}
\end{equation}
where $U(\theta)=e^{-i\theta \Tilde{H}}$, with the dimensionless Hamiltonian given by either $\Tilde{H}=H/J$ or $\Tilde{H}=H/g$ depending on the case.  Thus
\begin{equation}
 \Upsilon^{\Pcal_\theta}
 =\tr_E\ketbra{\Psi_\theta}{\Psi_\theta}.
 \label{eq:application-choi-state}
\end{equation}
For a mixed initial state, Eq.~\eqref{eq:application-choi-state} is extended linearly.  Define the four normalized-Choi cells
\begin{equation}
 C_{jA}=R_j^AA_j,
 \qquad C_{jB}=R_j^BB_j,
 \qquad j\in\{1,2\},
 \label{eq:application-Choi-cells}
\end{equation}
and group them either temporally as $T_j=C_{jA}C_{jB}$ or spatially as $V_A=C_{1A}C_{2A}$ and $V_B=C_{1B}C_{2B}$.  For relative entropy, the closest product state for each partition is the product of the corresponding marginals.  Hence
\begin{align}
 \bm{\rmN}_{D}^{\mathrm{Choi}}(\Pcal_\theta)
 &=D\!\left(\Upsilon^{\Pcal_\theta}\middle\|
 \Upsilon_{T_1}^{\Pcal_\theta}\otimes
 \Upsilon_{T_2}^{\Pcal_\theta}\right)\\
 &=H(T_1)+H(T_2)-H(T_1T_2).
 \label{eq:application-MI-calculation}
\end{align}
Similarly,
\begin{align}
 \bm{\rmR}_{D}^{\mathrm{Choi}}(\Pcal_\theta)
 &=H(V_A)+H(V_B)-H(T_1T_2),
 \label{eq:application-spatial-calculation}\\
 \bm{\rmT}_{D}^{\mathrm{Choi}}(\Pcal_\theta)
 &=\sum_{j=1}^{2}\left[H(C_{jA})+H(C_{jB})\right]-H(T_1T_2).
 \label{eq:application-total-calculation}
\end{align}
Equation~\eqref{eq:application-MI-calculation} is plotted in Fig.~\ref{fig:memory-dimension}, whereas Eqs.~\eqref{eq:application-MI-calculation}-- \eqref{eq:application-total-calculation} produce Fig.~\ref{fig:application-interpolation}.  No optimization over operational probing combs is required.

\subsection{Analytical evaluation of the $ZZ$ models}
\label{app:application-ZZ}

For the symmetric common-environment Hamiltonian,
\begin{equation}
 U_{ZZ}^{(1E)}(\theta)
 =\sum_{e=0}^{1}\ketbra e e_E\otimes
 e^{-i(-1)^e\theta(Z_A+Z_B)}.
 \label{eq:controlled-common-ZZ}
\end{equation}
Starting from a state diagonal in the $Z_E$ basis, every joint state remains block diagonal in that basis for arbitrary interventions on $AB$.  Hence inserting
\begin{equation}
 \Zcal_E(X)=\sum_{e=0}^{1}\ketbra e e X\ketbra e e
\end{equation}
at the temporal cut leaves the process invariant, proving classicality of the memory realization.

For $\rho_E=\mathbb I_E/2$, the two conditional normalized-Choi vectors have overlap
\begin{equation}
 c_{1E}(\theta)
 =\frac14\left|\tr e^{-2i\theta(Z_A+Z_B)}\right|
 =\cos^2(2\theta).
\end{equation}
Define
\begin{equation}
 F(c)=2h_2\!\left(\frac{1+c}{2}\right)
 -h_2\!\left(\frac{1+c^2}{2}\right),
 \label{eq:F-overlap}
\end{equation}
where $h_2$ is the binary entropy.  Diagonalizing the equal mixture of the two conditional Choi vectors gives
\begin{equation}
 \bm{\rmN}_{D}^{\mathrm{Choi}}\!\left[\Pcal_\theta^{(1E,ZZ)}\right]
 =F\!\left[\cos^2(2\theta)\right].
 \label{eq:common-ZZ-curve}
\end{equation}
At $\theta=\pi/4$, the conditional Choi vectors are orthogonal and Eq.~\eqref{eq:common-ZZ-curve} equals one.

The symmetry between the two time blocks and the two system qubits also gives
\begin{equation}
 \bm{\rmR}_{D}^{\mathrm{Choi}}\!\left[\Pcal_\theta^{(1E,ZZ)}\right]
 =\bm{\rmN}_{D}^{\mathrm{Choi}}\!\left[\Pcal_\theta^{(1E,ZZ)}\right].
 \label{eq:common-ZZ-N-equals-R}
\end{equation}
Writing $c(\theta)=|\cos(2\theta)|$, the corresponding four-cell total correlation is
\begin{equation}
 \bm{\rmT}_{D}^{\mathrm{Choi}}\!\left[\Pcal_\theta^{(1E,ZZ)}\right]
 =4h_2\!\left[\frac{1+c(\theta)}{2}\right]
 -h_2\!\left[\frac{1+c(\theta)^4}{2}\right].
 \label{eq:common-ZZ-total-curve}
\end{equation}
In particular, at $\theta=\pi/4$ the common classical bit produces
\begin{equation}
 \bm{\rmN}_{D}^{\mathrm{Choi}}=\bm{\rmR}_{D}^{\mathrm{Choi}}=1,
 \qquad
 \bm{\rmT}_{D}^{\mathrm{Choi}}=3.
 \label{eq:common-ZZ-NRT-special-time}
\end{equation}

For two independent environmental qubits, the pairs $AE_A$ and $BE_B$ instead have conditional-Choi overlap $|\cos(2\theta)|$.  Additivity across the two pairs therefore yields
\begin{equation}
 \bm{\rmN}_{D}^{\mathrm{Choi}}\!\left[\Pcal_\theta^{(2E,ZZ)}\right]
 =2F\!\left[|\cos(2\theta)|\right],
 \label{eq:two-ZZ-curve}
\end{equation}
which gives Eq.~\eqref{eq:two-ZZ-saturation} at $\theta=\pi/4$.

\subsection{Saturation by pairwise $i$SWAP interactions}
\label{app:application-XY}

For either system qubit $Q\in\{A,B\}$ and its environmental partner, let
\begin{equation}
 H_{XY}=\sigma_Q^+\sigma_E^-+\sigma_Q^-\sigma_E^+.
\end{equation}
At $\theta=\pi/2$, $e^{-i\theta H_{XY}}$ is an $i$SWAP up to the convention for its phase.  Substitution in Eq.~\eqref{eq:application-global-state} gives
\begin{equation}
 H(T_1)=1,\qquad H(T_2)=2,\qquad H(T_1T_2)=1,
\end{equation}
and hence
\begin{equation}
 I(T_1:T_2)=2.
 \label{eq:single-iswap-MI}
\end{equation}
The two-environment process is the tensor product of two such processes. Relative-entropy Choi non-Markovianity is additive on this product, so Eq.~\eqref{eq:single-iswap-MI} gives
\begin{equation}
 \bm{\rmN}_{D}^{\mathrm{Choi}}\!\left[\Pcal_{\pi/2}^{(2E,XY)}\right]=2+2=4,
\end{equation}
proving Eq.~\eqref{eq:two-XY-saturation}.  For the common-environment model, the curve in Fig.~\ref{fig:memory-dimension}(b), together with the spatial and total curves in Fig.~\ref{fig:application-interpolation}(b), is evaluated directly from Eqs.~\eqref{eq:application-global-state}-- \eqref{eq:application-total-calculation}.

\subsection{Single-memory tightness with joint encoding}
\label{app:application-tightness}

The physical common-$ZZ$ model already saturates the classical $d_E=2$ bound, but its environmental label can be resolved from either system qubit at the saturation time.  A complementary construction stores the label exclusively in a joint degree of freedom of $AB$.  Consider
\begin{equation}
 U_{\rm cl}(\theta)=e^{-i\theta Z_AZ_BZ_E},
 \qquad \rho_E=\frac{\mathbb I_E}{2}.
 \label{eq:joint-classical-construction}
\end{equation}
At $\theta=\pi/4$, the two conditional $AB$ Choi vectors are orthogonal but have identical one-qubit reductions.  Therefore
\begin{equation}
 \bm{\rmN}_{D}^{\mathrm{Choi}}=\bm{\rmN}_{D}=1,
\end{equation}
while the classical bit is encoded in correlations between $A$ and $B$.

For the quantum bound, initialize $E$ in $\ketbra{0}{0}_E$, define $V_{AB}=\mathrm{CNOT}_{B\to A}$, and let
\begin{equation}
 U_{\rm q}=V_{AB}^{\dagger}\mathrm{SWAP}_{AE}V_{AB}.
 \label{eq:encoded-swap}
\end{equation}
This swaps $E$ with the logical qubit $L=A\oplus B$.  It can be compiled using only the star couplings $A$--$E$ and $B$--$E$ through the chronological CNOT sequence
\begin{equation}
 C_{E\to A},\quad C_{A\to E},\quad C_{B\to E},\quad C_{E\to A}.
\end{equation}
Applying Eq.~\eqref{eq:encoded-swap} in both temporal blocks transports a maximally entangled logical qubit across the temporal cut, and consequently
\begin{equation}
 \bm{\rmN}_{D}^{\mathrm{Choi}}=\bm{\rmN}_{D}=2.
\end{equation}
Equations~\eqref{eq:joint-classical-construction} and \eqref{eq:encoded-swap} show that both single-qubit-memory constants are tight even when the relevant information is associated with a joint degree of freedom of the two-qubit system.

\bibliography{bounds}

\end{document}